\title{A Little Aggression Goes a Long Way} 
\author{Jyothi Krishnan}{IIT Gandhinagar, India}{jyothi.k@iitgn.ac.in }{}{Supported by the Center for Creative Learning}
\author{Neeldhara Misra}{IIT Gandhinagar, India}{neeldhara.m@iitgn.ac.in }{}{Supported by the SERB Early Career Researcher Grant ECR/2018/002967}
\author{Saraswati Girish Nanoti}{IIT Gandhinagar, India}{nanoti_saraswati@iitgn.ac.in}{}{Supported by CSIR}
\authorrunning{J. Krishnan, N. Misra, and S. Nanoti} 
\keywords{games, complexity} 
\begin{document}

\maketitle

\begin{abstract}
Aggression is a two-player game of troop placement and attack played on a map (modeled as a graph). Players take turns deploying troops on a territory (a vertex on the graph) until they run out. Once all troops are placed, players take turns attacking enemy territories. A territory can be attacked if it has~$k$ troops and there are more than~$k$ enemy troops on adjacent territories. At the end of the game, the player who controls the most territories wins. In the case of a tie, the player with more surviving troops wins. The first player to exhaust their troops in the placement phase leads the attack phase.

We study the complexity of the game when the graph along with an assignment of troops and the sequence of attacks planned by the second player. Even in this restrained setting, we show that the problem of determining an optimal sequence of first player moves is NP-complete. We then analyze the game for when the input graph is a matching or a cycle.
\end{abstract}

\section{Introduction}
\label{sec:introduction}

The game of \textsc{Aggression} was originally published in 1973 in Games with Pencil and Paper~\cite{solomongames} by mathematician and game designer Eric Solomon. The game has subsequently featured in the Leapfrogs Action Book: Doodles~\cite{1976leapfrog} written by the mathematics education group Leapfrogs, published in 1976. A variation of it --- called ``A Little Bit of Aggression'' also shows up in MathPickle's collection of activities~\cite{mathpickle}. In this game, a map is either given (see~\cref{fig:a_little_bit_of_aggression}) or drawn by the players (see~\Cref{fig:aggression}). Two players have a certain number of troops each. The game has two phases.

\begin{figure}[t]
    \centering
    \includegraphics[width=0.5\textwidth]{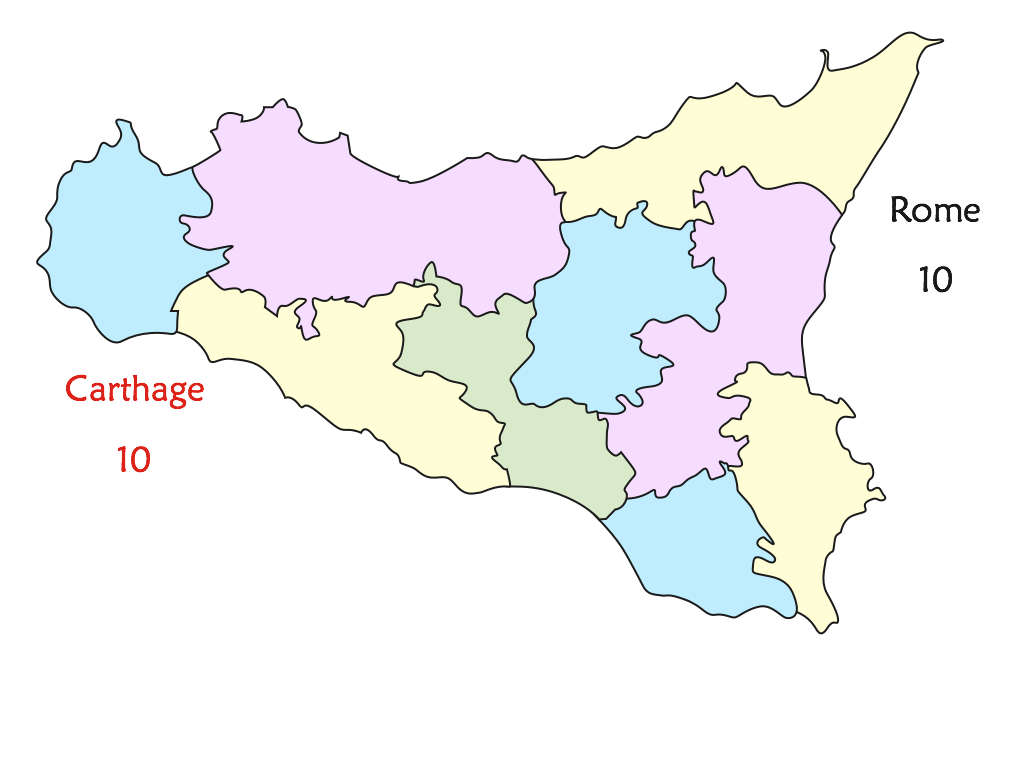}
    \caption{An example map that goes with ``A Little Bit of Aggression''. This is a map of Sicily which is played between Rome and Carthage. The map is from the MathPickle website.}
    \label{fig:a_little_bit_of_aggression}
\end{figure}

In the \emph{placement phase}, players take turns choosing an empty territory and placing any number of their troops into that territory. troops do not move once assigned to a territory. If a player has no troops left or if there are no empty territories on the board - they pass. The placement phase continues until both players pass. In the \emph{attacking phase}, the player who passed first in the placement phase has the first move. Players alternate selecting an enemy territory and counting all of their neighboring troops. If their combined strength is greater than the number of troops in the enemy territory, the enemy troops are all destroyed. This phase continues until no further attacks are possible.

The player who controls the most territories wins. In the case of a tie, we count the troops --- the player with more of them at the end of the game wins. If the number of troops are equal as well, the game is declared a draw.

\paragraph*{Our Contributions} 

The game of \textsc{Aggression} is a partisan combinatorial game, since it is a turn-based game between two players with perfect information and no element of chance, however, different players have access to different moves (each player commands their own troops). To the best of our knowledge, the game has not been studied from the perspectives of either computational complexity or combinatorial game theory.

Maps can be thought of as graphs, where the territories are interpreted as vertices, and the edges correspond to pairs of territories that share a border. Indeed, maps correspond to planar graphs --- however, we study \textsc{Aggression} on graphs in general, since the rules carry over naturally. 

Some graph structures admit straightforward strategies for the players:

\begin{itemize}
    \item If the graph is complete, i.e, all possible edges are present, it is optimal for both players to position all their troops on a single vertex, and this game ends in a draw.
    \item If the graph is a star, i.e, we have one ``central'' node~$v$ and all other nodes are adjacent to~$v$ and nothing else, it is again optimal for both players to position all their troops on a single vertex, and this game ends in a draw.
\end{itemize}

However, even for matchings, paths, and cycles, analyzing player outcomes turns out to be non-trivial. We show the following.

\begin{itemize}
    \item Both players can draw the game on matchings of size at most two.
    \item Both players can draw the game played on a matching if the number of edges in the matching is at least the number troops. 
    \item The second player has a winning strategy for the game played on a matching of size at least three when the number of troops is at least six more than the number of edges. 
    \item Both players have a strategy to draw the game on cycles of length three, four, and five.
\end{itemize}

We also formulate the following algorithmic question inspired by the setting of the game. Given a graph and an assignment of troops of both players to the vertices of the graph, and a sequence of attacks planned by the second player on their turn, can the first player win the game? We show this restricted non-adaptive version of the second phase of the original game is already NP-complete, even on bipartite graphs.

\textbf{Related Work.} \textsc{Aggression} is an example of a game on a graph. Games on graphs are often categorized as pursuit-evasion such as cops and robbers games~\cite{bonato2011game}, strategic defense games like greedy spiders~\cite{DBLP:conf/fun/HaanW18}, and guarding games~\cite{FOMIN20116484}. We refer the reader to~\cite{fijalkow2023games} for a detailed survey on games on graphs. We are not aware of prior literature specific to the game of aggression.

\begin{figure}[t]
    \centering
    \setlength{\fboxsep}{10pt} 
    \fbox{
    \begin{minipage}{0.45\textwidth}  
        \centering
        \includegraphics[width=\textwidth]{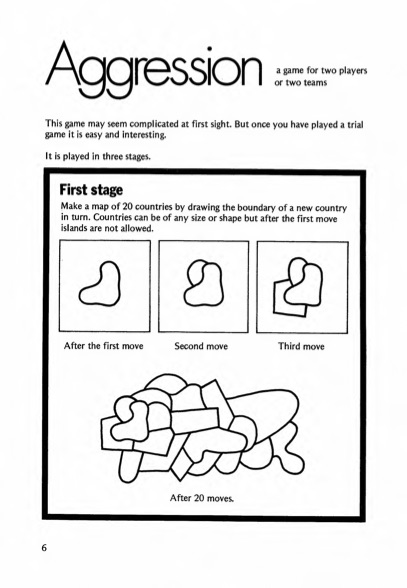}  
    \end{minipage}\hfill
    \begin{minipage}{0.45\textwidth}  
        \centering
        \includegraphics[width=\textwidth]{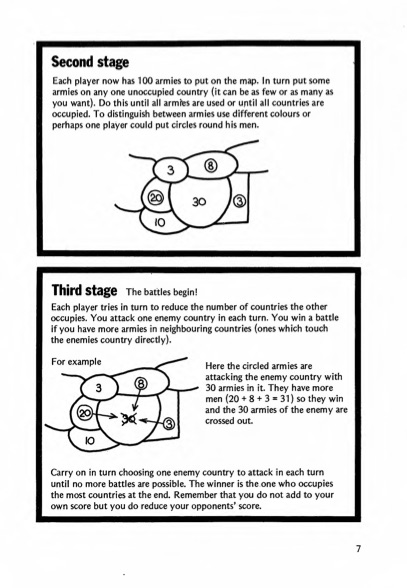}  
    \end{minipage}
    }
    \caption{Aggresion as depicted in the Leapfrogs Action Book ``Doodles''.}
    \label{fig:aggression}
\end{figure}

\section{Preliminaries and Notation}
\label{sec:prelims}


In the game of aggression, we have a graph~$G = (V,E)$, budgets~$T_L$ and~$T_R$, and two players who we call Lata (the first player) and Raj (the second player). The game proceeds in two phases:

\textbf{Placement Phase.} Lata and Raj take turns placing their troops on the vertices of the graph. Lata goes first. We use~$f_1(v)$ to denote the number of troops placed by Lata on vertex~$v$ and~$f_2(v)$ to denote the number of troops placed by Raj on vertex~$v$. Note that:

\[ \sum_{v \in V} f_1(v) \leqslant T_L; \sum_{v \in V} f_2(v) \leqslant T_R, \]

i.e, Lata can place a total of at most~$T_L$ troops and Raj can place a total of at most~$T_R$ troops; and:

\[ f_1(v) > 0 \implies f_2(v) = 0; f_2(v) > 0 \implies f_1(v) = 0, \]

i.e, one vertex can host troops from at most one of the players. A pair of functions that satisfy the properties above is called a two-player~$(T_L,T_R)$-troop placement. If~$T_L = T_R = T$, as will typically be the case, we call this a two-player~$T$-troop placement. We call a vertex~$v$ a player 1 vertex or Lata's vertex if~$f_1(v) > 0$, and similarly, a player 2 vertex or Raj's vertex if~$f_2(v) > 0$. We call a vertex~$v$ a neutral vertex if~$f_1(v) = f_2(v) = 0$.

\textbf{Attack Phase.} The first player to exhaust their troops in the placement phase gets the first move in the attack phase. Given~$f_1, f_2$, a vertex is vulnerable if either:

\[ f_1(v) > 0 \mbox{ and } \sum_{u \in N(v)} f_2(u) > f_1(v) \]

or

\[ f_2(v) > 0 \mbox{ and } \sum_{u \in N(v)} f_1(u) > f_2(v), \]

where~$N(v)$ denotes the set of neighbors of~$v$. The first player to move can choose a vulnerable vertex that ``belongs'' to the other player and attack it. If the attack is successful, the troops at the vertex are eliminated. The second player to move can then choose a vulnerable vertex that belongs to the opponent and attack it. Note that after a vertex is attacked, it is emptied of the troops that it has and becomes a neutral territory. In particular, the attacking player does not lose any troops in the process, nor does it ``gain'' the attacked territory. An attacking player simply aims to reduce the number of territories owned by the opponent.

This phase continues until no attacks are possible. At the end, the player with more territories wins, and if there is a tie, the player with more troops on the board wins --- in other words, in the event of a tie on the territories, player~$b \in \{1,2\}$ wins if~$\sum_{v \in V} f_b(v) > \sum_{v \in V} f_{3-b}(v)$. If both players have the same number of territories and troops at the end of the attack phase, the game is a draw. A player is said to have a \textbf{strong win} if they have at least two territories more than their opponent at the end of the game.

\paragraph*{An Algorithmic Question Inspired by Aggression} In \textsc{Optimal Response}, we are given a two-player~$T$-troop placement and an attack plan for Raj, and we want to know if Lata has a winning strategy against this particular attack plan:

\begin{framed}
\textsc{Optimal Response}
\begin{itemize}
    \item \textbf{Input.} A graph~$G = (V,E)$, two functions~$f_1, f_2: V \to \mathbb{N}$ such that~$(f_1,f_2)$ is a two-player~$(T_L,T_R)$-troop placement, and a sequence~$\sigma := \{i_1, i_2, \ldots, i_p\}$.
    \item \textbf{Output.} Consider the attack phase of the game where troops are placed according to~$(f_1,f_2)$. Output \textsc{Yes} if there is a sequence~$\tau := \{j_1, j_2, \ldots, j_q\}$ such that if Lata attacks~$j_k$ in her~$k\textsuperscript{th}$ move and Raj attacks~$i_k$ in his~$k\textsuperscript{th}$ move, then Lata wins the game; \textsc{No} otherwise.
\end{itemize}
\end{framed}




The \textsc{Multi-Colored Clique} problem is the following:

\begin{framed}
    \textsc{Multi-Colored Clique}
    \begin{itemize}
        \item \textbf{Input.} A graph~$G = (V_1 \uplus \cdots \uplus V_k,E)$, where~$|V_1| = \cdots = |V_k| = n$.
        \item \textbf{Output.} Output \textsc{Yes} if there is a subset~$S \subseteq V(G)$ such that~$|S \cap V_i| = 1$ for all~$1 \leqslant i \leqslant k$ and~$G[S]$ is a clique, and \textsc{No} otherwise.
    \end{itemize}
\end{framed}

\textsc{Multi-Colored Clique} is known to be \textsf{NP}-complete~\cite{pcbook}.

\section{Optimal Response}
\label{sec:or}

In this section, we show that \textsc{Optimal Response} is \textsf{NP}-complete. Membership in \textsf{NP} is immediate, since we can guess the sequence of attacks by Lata and verify that it is winning when combined with Raj's (given) attack sequence. We now demonstrate hardness by a reduction from \textsc{Multi-Colored Clique}.

\begin{theorem}
    \label{thm:or}
    \textsc{Optimal Response} is \textsf{NP}-complete even on bipartite graphs.
\end{theorem}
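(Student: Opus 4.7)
The plan is to reduce from \textsc{Multi-Colored Clique}. Given an input $G = (V_1 \uplus \cdots \uplus V_k, E)$ with $|V_i| = n$, we construct a bipartite graph $G'$, a two-player troop placement $(f_1, f_2)$, and a Raj attack sequence $\sigma$ such that Lata has a winning response iff $G$ has a multi-colored clique.

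The gadgetry has three parts. For each $v \in V(G)$ we plant a Raj ``selection'' territory $a_v$ on one side of the bipartition; for each color class $V_i$ we attach a Lata ``selector'' vertex $c_i$ on the other side carrying enough troops that every $a_v$ with $v \in V_i$ is vulnerable from the outset, so that attacking $a_v$ models \emph{selecting} $v$ from its class. For each edge $e = \{u,v\} \in E(G)$ we install a Raj ``edge witness'' $w_e$, wired so that $w_e$ can be killed only after both $a_u$ and $a_v$ are gone; this is engineered by choosing $w_e$'s Lata-side neighborhood and troop value so that the additional Lata troops ``freed'' by successful attacks on $a_u$ and $a_v$ are exactly what is needed to meet the vulnerability threshold for $w_e$. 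Finally we add a bank of Lata ``shield'' territories, designed so that the fixed sequence $\sigma$ picks them off at a tempo that leaves Lata exactly $k + \binom{k}{2}$ productive moves: one per color class and one per pair of color classes.

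The correctness argument is then essentially bookkeeping. For the forward direction, given a clique $\{v_1, \dots, v_k\}$ with $v_i \in V_i$, Lata attacks $a_{v_1}, \dots, a_{v_k}$ and then each $w_{\{v_i, v_j\}}$; every move is legal because the chosen vertices are pairwise adjacent in $G$, and a direct territory count (with the troop tie-breaker) shows Lata wins. For the reverse direction, a pigeonhole argument on $\sigma$ forces any winning Lata response to contain at least one attack per color class and exactly $\binom{k}{2}$ witness kills, and the vulnerability design then forces those witnesses to correspond to edges of $G$ among the selected vertices, producing a multi-colored clique. The hard part of the plan is the numerical calibration: the troop values on selectors, witnesses, and shields must be tuned so that any stray move---an attack on an unselected $a_v$, a witness for a non-edge, or a filler vertex---is either blocked by the vulnerability rule or strictly dominated in tempo by the intended sequence, so that no ``cheating'' strategy lets Lata win without actually producing a clique. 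Keeping all these inequalities consistent while simultaneously preserving bipartiteness is where the construction demands the most care.
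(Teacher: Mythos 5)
Your overall skeleton (selection territories per vertex of $G$, a heavy per-class selector that makes them attackable, an edge gadget per edge, padding territories, and a counting argument) matches the paper's reduction in spirit, but the central edge-witness gadget rests on a mechanic that does not exist in the game, and this is a genuine gap. You make each $w_e$ a \emph{Raj} territory and want it to become killable by Lata ``only after both $a_u$ and $a_v$ are gone,'' with the vulnerability threshold met by Lata troops ``freed'' by her successful attacks on $a_u$ and $a_v$. In Aggression no troops are freed, moved, or consumed by attacking: an attack only deletes the defender's troops at the attacked vertex, and the attacker's placement is unchanged. Hence the vulnerability of a Raj territory $w_e$ depends solely on the total of Lata's \emph{surviving} troops adjacent to $w_e$, a quantity that is non-increasing over the attack phase (it can only drop when Raj destroys one of Lata's adjacent territories, per $\sigma$). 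Destroying the Raj territories $a_u$ and $a_v$ cannot raise it. Consequently, if $w_e$ is not vulnerable at the start of the attack phase it never becomes vulnerable, so the gating ``killable only after both endpoints are selected'' is impossible to engineer, no matter how you calibrate troop values; the forward direction of your argument (Lata attacks the $k$ selections and then the $\binom{k}{2}$ witnesses) therefore has no legal realization. The tempo claim about shields is also shaky --- Lata's number of moves is limited by which of her attacks remain valid, not by how many of her territories Raj removes --- but that is secondary.

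The paper's construction sidesteps exactly this obstruction by flipping the ownership of the edge gadget: $w_e$ is a \emph{Lata} territory with one troop, attacked by Raj in $\sigma$ after the guards, and the validity of Raj's attack on $w_e$ depends on Raj's surviving troops at the two endpoint vertices. Since Lata's $k$ guard-powered attacks delete Raj's troops, they \emph{do} change which of Raj's planned attacks succeed: $w_e$ is saved precisely when both endpoints were attacked, i.e., when $e$ lies inside the selected set, and the territory count then forces that set to be a multi-colored clique. If you want to salvage your version, you should re-orient your edge gadget the same way (dependence must run through the attacker's surviving troops, not through a hoped-for transfer of the defender-killer's troops).
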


\begin{proof}
We reduce from \textsc{Multi-Colored Clique}. Let~$G = (V_1 \uplus \cdots \uplus V_k,E)$ be an instance of \textsc{Multi-Colored Clique}, where~$n := |V_1| = |V_2|= \cdots =|V_k|$ and~$E := \{e_1,\ldots,e_m\}$. We assume, without loss of generality, that $n > k+2$: this can always be ensured, for example, by adding isolated vertices in the color classes. We construct an instance~$$\langle H = (V,F), T, (f_1,f_2), \sigma := \{i_1, i_2, \ldots, i_p\} \rangle$$ of \textsc{Optimal Response} as follows (see also:~\Cref{fig:construction}). 

\begin{itemize}
    \item The vertex set of~$H$ has one vertex for every vertex in~$G$. We use~$U_1 \uplus \cdots \uplus U_k$ to denote these vertices, where~$u_{i,j}$ for~$1 \leqslant i \leqslant k$ and~$1 \leqslant j \leqslant n$ denotes the vertex in~$H$ corresponding to the~$j\textsuperscript{th}$ vertex in~$V_i$. It also has one vertex for every edge in~$G$. We use~$w_e$ to denote the vertex corresponding to an edge~$e \in E$. Finally, we introduce~$k$ ``guard'' vertices, denoted~$g_1, \ldots, g_k$ and a set of~$(n-1)\cdot k - {k \choose 2} + 1$ vertices denoted by~$Z$.
    \item For each~$1 \leqslant i \leqslant k$, the guard vertex~$g_i$ is adjacent to all vertices in~$U_i$.
    \item For all~$e \in E$ such that~$e = (x,y)$, where~$x$ is the~$p\textsuperscript{th}$ vertex in~$V_i$ and~$y$ is the~$q\textsuperscript{th}$ vertex in~$V_j$, we make~$w_e$ adjacent to~$u_{i,p}$ and~$v_{j,q}$.
    \item The vertices in~$Z$ are isolated.
    \item Lata has $(m+1)$ troops on each of the guard vertices, that is,~$f_1(g_i) = m+1$ for all~$1 \leqslant i \leqslant k$; one troop on all the vertices corresponding to edges of~$G$, that is,~$f_1(w_e) = 1$ for all~$e \in E$; and one troop on every vertex of~$Z$, that is,~$f_1(z) = 1$ for all~$z \in Z$.
    \item Raj has $m$ troops on each vertex in~$H$ that correspond to the vertices of~$G$, that is,~$f_2(u_{i,j}) = m$ for all~$1 \leqslant i \leqslant k$ and~$1 \leqslant j \leqslant n$,
    \item The sequence~$\sigma$ is given by:~$\{g_1, \ldots, g_k, w_{e_1}, \ldots, w_{e_m}\}$.
    \item Note that~$T_L = k(m+1) + m + (n-1)k-{k \choose 2} + 1$ and~$T_R = mnk$.
\end{itemize}

\begin{figure}[t]
    \centering
\begin{tikzpicture}
    \tikzset{
      rectnode/.style={draw, rectangle, minimum height=1cm, minimum width=4cm},
      circlenode/.style={draw, circle, white, fill=IndianRed, scale=0.75},
      starnode/.style={draw, white, star, star points=5, minimum size=1cm, fill=SeaGreen},
      Znode/.style={draw, circle, white, fill=SeaGreen, scale=1},
      squarenode/.style={draw, rectangle, minimum size=0.42cm, white, fill=SeaGreen},
    }
  
    \node[rectnode] (rect1) at (0,0) {};
    \node[rectnode] (rect2) at (5,0) {};
    \node[rectnode] (rect3) at (10,0) {};
  
    \node at (2.5,0) {$\ldots$};
    \node at (7.5,0) {$\ldots$};

    \node[starnode, above=of rect1] (star1) { $g_1$ };
    \node[starnode, above=of rect2] (star2) { $g_i$ };
    \node[starnode, above=of rect3] (star3) { $g_k$ };
  
    \node[squarenode] (square) at (3.5,-2) {};
    \draw (square) -- (5.5,0);
    \draw (square) -- (0,0);

    \foreach \i in {2,2.5,3,3.5,4,4.5,5} {
        \draw (star1) -- (\i-3.5,0);
        \node[circlenode] at (\i-3.5,0) {};
    }

    \foreach \i in {6,6.5,7,7.5,8,8.5,9} {
        \draw (star2) -- (\i-2.5,0);
        \node[circlenode] at (\i-2.5,0) {};
    }
    
    \foreach \i in {2,2.5,3,3.5,4,4.5,5} {
        \draw (star3) -- (\i+6.5,0);
        \node[circlenode] at (\i+6.5,0) {};
    }

    \node at (rect1.south west) [below left] {$V_1$};
    \node at (rect2.south west) [below left] {$V_i$};
    \node at (rect3.south west) [below left] {$V_k$};
  
    \foreach \i in {2,3,4,5,6,7,8,9} {
        \node[Znode] at (\i-1,-4) {};
    }
    \draw [decorate,decoration={brace,amplitude=10pt,mirror,raise=4pt},yshift=0pt] (0.8,-4.2) -- (8.2,-4.2) node [black,midway,yshift=-0.8cm] {};

    \node at (4.5,-5) {$Z$};

    \node at (square.south) [below] {$e_i$};
  \end{tikzpicture}
\caption{The graph $H$ constructed in the proof of \Cref{thm:or}. Vertices with Lata's troops are colored {\color{SeaGreen}green}, vertices with Raj's troops are colored {\color{IndianRed}red}. Note that all vertices that belong to Raj have $m$ troops, the guard vertices (depicted as stars) have $(m+1)$ troops, while the edge vertices (depicted as squares) and the vertices in $Z$ have one troop each.}
\label{fig:construction}
\end{figure}
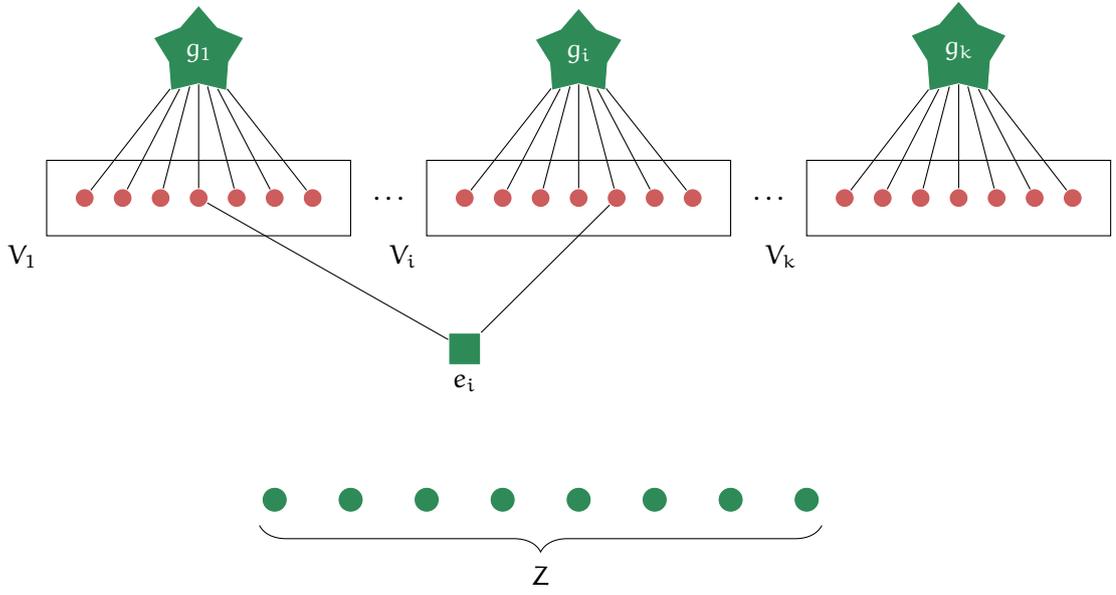

This completes a description of the construction of the reduced instance. We now argue the equivalence of the instances, i.e, we show that~$G$ has a multi-colored clique if and only if Lata has a winning response to~$\sigma$ in the game described above.

\textbf{The Forward Direction.} Suppose that~$G$ has a multi-colored clique~$S = \{x_1, \ldots, x_k\}$, where~$x_i := S \cap V_i$. Let~$f(i)$ denote the index of~$x_i$ in the part~$V_i$. Then, for~$1 \leqslant i \leqslant k$, on her~$i\textsuperscript{th}$ move, Lata attacks~$u_{i,f(i)}$ using her $(m+1)$ troops on $g_i$. Notice that after these $k$ moves, Lata has no valid attacks left: all the guard vertices are emptied of troops, and there are no threats posed by the $w_e$'s. Indeed, a $u_{i,j}$ vertex that is still occupied has $m$ troops and can be attacked by at most $m$ troops, so they remain safe. 

Now observe that all but ${k \choose 2}$ of the attacks planned by Raj according to $\sigma$ are successful. To see this, consider $e = (x,y) \in G[S]$. Let~$x$ be the~$p\textsuperscript{th}$ vertex in~$V_i$ and~$y$ be the~$q\textsuperscript{th}$ vertex in~$V_j$. Then the attack on $w_e$ is not a valid one, because there are no troops on either~$u_{i,p}$ and~$v_{j,q}$ --- recall that these were the locations attacked by $g_i$ and $g_j$, respectively.

The number of territories owned by Lata at the end of the attack phase is:

\[{\color{SeaGreen}{k \choose 2}} + {\color{OrangeRed}\left((n-1)\cdot k - {k \choose 2} + 1\right)} = (n-1)\cdot k + 1,\]

where the first term corresponds to {\color{SeaGreen}vertices in $\{w_e \mid e \in G[S]\}$} and the second term corresponds to the {\color{OrangeRed}vertices in $Z$}.

Let us now count the number of territories owned by Raj at the end of the attack phase. Raj originally started with $nk$ territories corresponding to the vertices in $U_1 \uplus \cdots \uplus U_k$. Further, Lata executed a successful attack on exactly one vertex in each $U_i$. Recalling that $|U_i| = n$ for all $1 \leqslant i \leqslant k$, we see that Raj owns $(n-1)\cdot k$ territories at the end of the attack phase.

Therefore, Lata wins this game with this response.

\textbf{The Reverse Direction.} Suppose that Lata has a winning response to $\sigma$. Observe that the first $k$ attacks in $\sigma$ will be valid no matter what attacks are made by Lata, since $n > k+2$. Also, observe that if $g_i$ has no troops, then no vertex in $U_i$ that has troops is vulnerable to an attack, since the vertices in $U_i$ have $m$ troops to begin with, and they are adjacent to at most $m$ of Lata's troops. We also know that for all $1 \leqslant i \leqslant k$, the vertex $g_i$ will not be available for attack after the $i\textsuperscript{th}$ move by Raj. Therefore, Lata has $k$ valid attacks overall.

Consider an edge $e = (x,y) \in G[S]$. Let~$x$ be the~$p\textsuperscript{th}$ vertex in~$V_i$ and~$y$ be the~$q\textsuperscript{th}$ vertex in~$V_j$. Then the attack on $w_e$ is \emph{invalid} if and only if both $u_{i,p}$ and $v_{j,q}$ have been attacked by Lata. Let us say that such an edge stands \emph{protected}. It is easy to check that the number of territories that Lata will own after the attack phase will be:

\[{\color{SeaGreen}{\ell}} + {\color{OrangeRed}\left((n-1)\cdot k - {k \choose 2} + 1\right)} = (n-1)\cdot k + 1 + \left(\ell - {k \choose 2}\right),\]

where the first term corresponds to the {\color{SeaGreen}number of protected vertices among $\{w_e \mid e \in G[S]\}$} and the second term corresponds to the {\color{OrangeRed}vertices in $Z$}.

On the other hand, since Lata has only $k$ valid attacks, Raj will own at least $nk - k = (n-1)\cdot k$ territories at the end of the attack phase. Therefore, for Lata to win, we must have that $\ell \geqslant {k \choose 2}$, in other words, the number of protected edges must be at least ${k \choose 2}$. However, every protected edge is incident to with two attacked vertices among $U_1 \uplus \cdots \uplus U_k$, and we know that there were only $k$ attacks. The only way for ${k \choose 2}$ edges to have both their endpoints lie in a set of $k$ vertices is if the $k$ vertices in question form a clique. Therefore, it must be the case that the choices of attacks made by Lata corresponds to the choice of a vertex subset that forms a multi-colored clique in $G$. 
\end{proof}

\begin{remark}
We note that it is easy to modify the reduction to ensure that $T_L = T_R$ in the reduced instance by appropriately adding dummy vertices and troops. It is also easy to ensure that the reduced instance is a connected bipartite graph, by making the isolated vertices adjacent to $g_1$, for example.
\end{remark}

\section{Winning Strategies}
\label{sec:strategies}

\subsection{Matchings}

Let~$G$ be a disjoint union of edges denoted by~$e_1 := (u_1,v_1), \ldots, e_m := (u_m,v_m)$. We first observe that the second player can always draw a game of \text{Aggression} played on~$G$ by mirroring the first player's moves on the other endpoint of the edge chosen by them.

\begin{lemma}
\label{lem:raj-draw-matching}
    If~$G$ is a disjoint union of edges, Raj has a drawing strategy.
\end{lemma}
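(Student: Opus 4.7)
The plan is to exhibit an explicit mirroring strategy for Raj and show it forces a draw via a simple invariant. Specifically, whenever Lata places $k$ troops on a vertex $v$, Raj responds by placing exactly $k$ troops on the unique matching partner $v'$ of $v$. I would assert and inductively maintain the following invariant after each of Raj's moves: for every edge $e_i = (u_i,v_i)$, either both endpoints are empty, or one endpoint holds $k$ of Lata's troops and the other holds $k$ of Raj's troops, for some positive integer $k$. In addition, Lata and Raj have placed the same total number of troops so far, and therefore have the same number of troops remaining.

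The first thing I would check is that Raj's mirroring move is always legal. Since Lata must place on an empty vertex $v$, the invariant (applied just before Lata's move) forces the edge containing $v$ to be in the "both empty" configuration; hence $v'$ is still empty and available for Raj. The troop-budget half of the invariant, together with $T_L = T_R$, ensures Raj has enough troops to copy Lata's amount. The invariant clearly persists: after Lata's placement the edge has Lata's $k$ troops on $v$ and an empty partner, and after Raj's response it reaches the second permissible state. The two players consequently exhaust their troops on the same round (or both pass simultaneously if the board fills up first), so in particular Lata is the first to run out of moves and leads the attack phase.

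Now I would turn to the attack phase on the resulting configuration. Any non-empty vertex $v$ holds $k > 0$ troops of some player, and its unique neighbour $v'$ holds exactly $k$ troops of the opposite player. The vulnerability condition requires the neighbours' total to \emph{strictly exceed} $k$, but it equals $k$. So no vertex is ever vulnerable, no attacks occur, and the final board is the one produced by the placement phase. By the invariant each active edge contributes exactly one territory to each player, and the troops on the board also coincide; the game is a draw.

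The only real subtlety is verifying that Lata cannot manoeuvre Raj into a state where mirroring fails — either by targeting a partner that is already occupied, or by forcing Raj to run out of troops first. Both failure modes are ruled out by the two halves of the invariant (empty-vs-paired configuration of edges, and equal remaining troop counts), which is why I would lead with the invariant rather than ad hoc case analysis.
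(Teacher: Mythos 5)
Your proposal is correct and follows essentially the same route as the paper: Raj mirrors each of Lata's placements with an equal number of troops on the matched endpoint, maintains the paired-edge/equal-budget invariant, and concludes that no vertex is ever vulnerable, so the game is drawn. The extra checks you include (legality of the mirror move, equal remaining budgets) are just a more explicit spelling-out of the same argument.
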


\begin{proof}
    If Lata places~$k_i$ troops on an edge~$e$ on her~$i\textsuperscript{th}$ move, Raj responds by placing~$k_i$ troops on the unoccupied endpoint of~$e$ in response. Note that the invariant maintained at the end of~$r$ rounds is that the endpoints of a subset of~$r$ edges have been occupied by both players. As no attacks are feasible at the end of the placement phase, the game is a draw.
\end{proof}

If there are a large number of edges relative to the total number of troops, then we show that the first player also has a drawing strategy.

\begin{lemma}
    If~$G$ is a disjoint union of edges and~$m \geqslant 2T$, then Lata has a drawing strategy.
\end{lemma}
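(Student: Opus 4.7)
The plan is to give Lata an explicit, reactive strategy that guarantees a draw (or a win). The na\"ive "one troop per fresh edge" policy fails, because Raj can concentrate two troops on the mate of a singleton Lata vertex and attack it; a pure mirror strategy also fails, because Lata's obligatory first move is never matched by Raj. I would combine the two ideas. Lata opens by placing one troop on an arbitrary fresh edge, and on every subsequent turn she inspects Raj's previous move. If Raj has just played on the mate $v_e$ of some Lata-occupied vertex $u_e$, Lata adds $b_e - a_e + 1$ troops to $u_e$ so as to reach $a_e = b_e + 1$, spending whatever remains if she cannot afford that much; otherwise she places one troop on a fresh edge disjoint from every touched edge. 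The hypothesis $m \geq 2T$ is used only to guarantee that such a fresh edge always exists, since before Lata's $k$-th move at most $2(k-1) \leq 2T - 2$ edges have ever been touched.

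The analysis rests on two claims. The \emph{safety invariant} states that after each Lata turn, every shared edge $e$ satisfies $a_e \geq b_e$. The \emph{budget bound} states that Lata's cumulative expenditure after her $(s+1)$-th move is at most $1 + \sum_{j=1}^{s} R_j$, where $R_j$ is the number of troops Raj placed on his $j$-th turn. The inductive step for the budget bound uses the fact that whenever Raj's move creates a vulnerability on a shared edge where the invariant gave $a_e \geq b_e + 1$, Lata's matching response costs at most $R_s$ troops, while a fresh-edge response costs exactly one. Because Raj's total budget is $T$, Lata's desired cumulative spend is at most $T + 1$, so she can fall short of her planned reinforcement by at most one troop across the entire game. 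Crucially, such a shortfall can occur only when $\sum_{j=1}^{s} R_j = T$, i.e., after Raj has exhausted his budget and can no longer create threats; and being short by a single troop merely downgrades the targeted $a_e = b_e + 1$ to $a_e = b_e$, which is still safe.

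Given $a_e \geq b_e$ on every shared edge at the end of placement, the attack phase is clean. Partition the edges into $s_1$ shared with $a_e = b_e + 1$, $s_0$ shared with $a_e = b_e$, $l$ Lata-only (each carrying exactly one troop), and $r$ Raj-only. Raj has no valid attack, while Lata eliminates each vulnerable $v_e$ on an $s_1$-edge. Combining Lata's budget identity $\sum_{\text{shared}} a_e + l = T$ with $a_e - b_e \in \{0,1\}$ yields $\sum_{\text{Raj-only}} b_e = l + s_1$, and since each Raj-only edge carries at least one troop, $l + s_1 \geq r$. This quantity is precisely Lata's territorial surplus; when it is positive Lata wins, and when it vanishes one must have $s_1 = 0$, which forces $l = r$ and equal troop counts, giving a draw. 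I expect the main technical obstacle to be the budget bookkeeping that pins the shortfall to a single troop and simultaneously to Raj's exhaustion.
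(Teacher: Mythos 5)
Your strategy is built on a move that the rules of \textsc{Aggression} do not allow. In the placement phase a turn consists of choosing an \emph{empty} territory and placing any number of troops on it; troops do not move once assigned, and no vertex is ever played on twice. So once Lata has opened an edge $e$ with a single troop on $u_e$, she can never ``add $b_e - a_e + 1$ troops to $u_e$'' after Raj commits $b_e$ troops to the mate $v_e$ --- that vertex is no longer empty. This reinforcement step is the engine of your whole argument: it is what maintains the safety invariant $a_e \geq b_e$ on shared edges, which in turn drives the budget bound and the endgame counting. Without it the invariant is unattainable by any reactive scheme, since Raj can always answer a singleton of Lata's with $a_e+1$ or more troops on its mate, and that Lata vertex then remains vulnerable for the rest of the game. (Note that the paper's own case analyses, e.g.\ Raj's ``triumphant'' replies of $f_1(w)+1$ troops on matchings, only make sense because topping up an occupied vertex is impossible.) As a smaller point, even granting reinforcement, your final step asserts that a vanishing surplus $s_1 + l - r = 0$ forces $s_1 = 0$; it does not (e.g.\ $s_1 = 1$, $l = 0$, $r = 1$), although the conclusion would survive there via the troop tiebreak, since destroyed $s_1$-vertices cost Raj troops while Lata loses none.

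For comparison, the paper's proof never reinforces and never tries to outnumber Raj locally. Lata simply places \emph{one} troop per turn, always on the lowest-indexed edge with both endpoints unoccupied --- the hypothesis $m \geq 2T$ is what guarantees such an edge always exists, even at the end. The argument is then a counting one: classify Raj's replies as neutralizing (one troop on a mate), capturing (more than one troop on a mate) or initiating a fresh edge, observe that each capture costs Raj at least two troops, so $k$ captures force Raj to exhaust his budget while Lata still holds at least $k$ troops and at least $k$ untouched edges remain, and Lata then redeploys those troops on safe fresh vertices to replace the $k$ territories she will lose in the attack phase. If you want to pursue your line, you would have to redesign it around commitments that are fixed at the moment a vertex is first (and only ever) played.
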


\begin{proof}
    On her~$i\textsuperscript{th}$ move in the placement phase, let~$f(i)$ denote the smallest index for which both endpoints of~$e_{f(i)}$ are unoccupied. Lata places one troop on~$e_{f(i)}$. Raj has three possible responses to this move:

    \begin{enumerate}
        \item Place one troop on an unoccupied endpoint of~$e_{f(\ell)}$ for some~$\ell \leqslant i$. With this response, we say that Raj neutralized the edge~$e_{f(\ell)}$.
        \item Place more than one troop on an unoccupied of~$e_{f(\ell)}$ for some~$\ell \leqslant i$. With this response, we say that Raj captured the edge~$e_{f(\ell)}$.
        \item Place one or more troops on an endpoint of an edge~$e$ that has not been occupied by either player. With this response, we say that Raj initiated the edge~$e$.
    \end{enumerate}

    Note that Lata never runs out of troops before Raj, since she places only one troop on each move. If Lata and Raj play the same number of moves in the placement phase, then observe that Raj has not captured any edges, and Lata and Raj have the same number of territories. There are no valid attacks in this case, so the game ends in a draw.

    Now assume that Raj made~$k$ capturing responses. This implies that:

    \begin{itemize}
        \item When Raj runs out of troops, Lata has at least~$k$ troops remaining in the placement phase.
        \item Lata and Raj occupy the same number of territories (say~$R$) when Raj runs out of troops.
        \item None of the territories occupied by Raj are vulnerable to attack.
        \item Lata occupies~$R-k$ territories that are not vulnerable in the attack phase.
    \end{itemize}

s    The number of edges for which at least one of the endpoints is occupied by troops is at most~$k + 2 \cdot (T-k)$. The number of edges that remain with both endpoints unoccupied is therefore at least:
    
$$m - \left(k+2 \cdot(T-k)\right) = m - 2T + k \geqslant k.$$

    This gives Lata~$2k$ vertices on which she can spread her remaining troops. She has at least~$k$ troops to place, and she can place them in a way that the territories they occupy will not be vulnerable in the attack phase. So the total number of territories occupied by both players at the end of the attack phase is~$R$, and the game ends in a draw.
\end{proof}

Therefore, the game ends in a draw for both players when there are a large number of edges relative to the total number of troops. However, when there are more troops than edges, then we show that the second player can, in fact, win. 

\begin{observation}
    \label{lem:2edges-draw}
    Assume Lata and Raj have~$T$ troops each. If~$G$ is a single edge or a disjoint union of two edges, then Lata has a drawing strategy.
\end{observation}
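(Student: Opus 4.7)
The plan is to handle the single-edge and two-edge cases of the hypothesis separately.

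\textbf{Single edge.} For $G$ consisting of a single edge $(u,v)$, the strategy I would give Lata is to place all $T$ troops on $u$ in her opening move. The only vertex left for Raj is $v$, so his placement is fully characterized by the number $k \leqslant T$ of troops he deposits there (possibly accumulated over multiple moves). If $k = T$, no vertex is vulnerable (each holds $T$ troops against $T$ opposing troops on its single neighbor); the attack phase produces no attacks and the game ends in a 1--1 territorial tie with $T$ troops apiece, a draw. If $k < T$, then $v$ is vulnerable while $u$ is not, and since Lata exhausted her troops first in the placement phase she leads the attack phase, eliminates $v$'s troops, and finishes as the sole occupant of a territory. Either way Lata achieves at least a draw.

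\textbf{Two edges.} For $G = e_1 \uplus e_2$ with $e_i = (u_i,v_i)$, the ``all in'' strategy from the single-edge case breaks down: if Lata commits everything to $u_1$, Raj can distribute one troop each onto $u_2$ and $v_2$ to acquire two territories in a component Lata has no way to attack. The approach I would pursue is an adaptive mirroring strategy: whenever Raj places $k$ troops on an endpoint $x$ of some edge $e_i$, Lata responds on her next turn by placing $k$ troops on the opposite endpoint of $e_i$. Because Raj can only choose empty territories, and the mirror fills both endpoints of $e_i$ together, he cannot return to $e_i$ afterwards; and within each touched edge the two endpoints end up carrying identical troop counts, so no vertex becomes vulnerable. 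The placement phase therefore produces a state in which each player owns exactly one endpoint of every contested edge with equal troop totals, the attack phase is trivial, and the game ends in a 2--2 draw.

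The main obstacle will be making the mirror well-defined at Lata's opening turn, since she moves first and there is nothing yet for her to mirror. I would handle this by arranging for her first placement to be compatible with the subsequent mirror --- for instance, by treating her opening as a null move when the rules permit, or by folding a minimal opening commitment into her future mirror responses --- and then carrying out a short case analysis on how Raj partitions his $T$ troops across $\{u_1,v_1,u_2,v_2\}$ and on the order of his placements. In each case I would verify that Lata's responses preserve the edge-balance invariant, which is enough to guarantee the claimed draw.
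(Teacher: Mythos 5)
Your single-edge case is fine (the paper dismisses it as trivial), but the substantive part of the claim is the two-edge case, and there your proposal has a genuine gap: you never actually specify Lata's strategy. Mirroring is a \emph{second-player} device --- it is exactly how the paper proves Raj's drawing strategy on matchings --- and you correctly notice that it breaks at Lata's opening move, but the fixes you sketch do not survive the rules of the game. A ``null move'' is not available: a player may pass only when they have no troops left or no empty territory exists, so Lata must commit troops on her first turn. Nor can an opening commitment be ``folded into'' later mirror responses, because each placement must be made on an \emph{empty} territory; once Lata has put troops on a vertex she can never reinforce it. Concretely, if she opens with a small stack on $u_1$ so as to keep the mirror intact, Raj can answer with a larger stack on $v_1$, leaving $u_1$ permanently vulnerable and the ``mirror'' undefined on that edge (its two endpoints are already full with unequal counts), after which the claimed edge-balance invariant simply fails and the outcome depends on an analysis you have deferred to an unexecuted ``short case analysis.'' So as written the proof does not establish the observation.

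The paper resolves precisely this first-move difficulty differently: Lata opens with $\lceil \nicefrac{T}{2} \rceil$ troops on one endpoint $u$ of an edge $(u,v)$. If Raj plays on the other edge, she places $\lfloor \nicefrac{T}{2} \rfloor$ on $v$ and no attacks are possible; if Raj plays on $v$, she places $\lfloor \nicefrac{T}{2} \rfloor$ on an endpoint of the other edge, and a three-case check on whether Raj put fewer than, exactly, or more than $\lceil \nicefrac{T}{2} \rceil$ troops on $v$ shows each line ends in a draw. The balanced split is the key: it guarantees that whichever vertex of Lata's becomes vulnerable, she retains a counterattack of comparable strength, which is the invariant your small-opening mirror cannot provide. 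To repair your write-up you would need to either adopt such a balanced opening and do the explicit case analysis, or prove that some specific opening size together with concretely specified (non-mirror) follow-up responses secures at least a draw against every reply of Raj.
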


\begin{proof} The case when $G$ is a single edge is trivial. Suppose $G$ is a disjoint union of two edges, say $\{(u,v), (x,y)\}$. Lata places $\lceil \nicefrac{T}{2} \rceil$ troops on $u$. If Raj responds by playing on either $x$ or $y$, then Lata responds by placing $\lfloor \nicefrac{T}{2} \rfloor$ troops on $v$ --- in this case the game is evidently drawn since no attacks are possible and Lata claims at least two territories --- otherwise, Raj plays on $v$ and Lata responds by positioning $\lfloor \nicefrac{T}{2} \rfloor$ troops on $x$. This scenario is also a draw:
\begin{itemize}
\item If Raj plays \textbf{{\color{SeaGreen}less than $\lceil \nicefrac{T}{2} \rceil$ troops on $v$}}, then Lata can attack $v$ and Raj can attack $x$, and the game ends in a draw with both players having claim to one territory each.

\item If Raj \textbf{{\color{DodgerBlue}plays exactly $\lceil \nicefrac{T}{2} \rceil$ troops on $v$}}, then neither player has a valid attack and the game ends in a draw with both players having claim to two territories each. 

\item If Raj \textbf{{\color{IndianRed}plays more than $\lceil \nicefrac{T}{2} \rceil$ troops on $v$}}, then Raj can attack $u$ and Lata can attack $y$, and the game ends in a draw with both players having claim to one territory each.
\end{itemize}
Since Raj already has a drawing strategy for any matching by~\Cref{lem:raj-draw-matching}, the claim follows.
\end{proof}

\begin{lemma}
    \label{lem:3edges9troops-win}
    Assume Lata and Raj have~$T$ troops each. If~$G$ is a disjoint union of three edges and~$T \geqslant 9$, then Raj has a winning strategy.
\end{lemma}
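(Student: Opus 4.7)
My plan is to give Raj an explicit strategy and verify it case-by-case. Label the edges $e_i = (u_i, v_i)$ for $i \in \{1,2,3\}$, and assume without loss of generality that Lata's opening move places $a$ troops on $u_1$. If $a = T$, I would have Raj spread his $T$ troops on two vertices drawn from $e_2$ and $e_3$ over his subsequent (uncontested) turns, leaving no edge with troops from both players; Raj ends with two territories to Lata's one. Otherwise $a \leqslant T-1$, and Raj's opening response is to place $a+1$ troops on $v_1$, which makes $u_1$ vulnerable while keeping $v_1$ safe.

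For each of Lata's subsequent placements of $b$ troops on a fresh vertex $x$ (which must lie on $e_2$ or $e_3$, since $e_1$ is already fully occupied), I would direct Raj to play on the edge-partner $y$ of $x$ and to prioritise his options as follows: overpower ($b+1$ troops on $y$) when the budget allows, otherwise match ($b$ troops on $y$, leaving both endpoints safe), and failing both place any remainder safely on the still-untouched edge or simply pass. The invariant I would maintain is that every Raj-owned vertex is safe at the end of placement, while every vertex Raj has overpowered is vulnerable for Lata.

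The analysis then concludes with a split on the number of Lata's moves $k_L \in \{1,2,3\}$. The case $k_L = 1$ is handled by the opening move above. For $k_L = 2$ Lata must spend her remaining $T - a$ troops on one vertex, so Raj can neither overpower nor match; he places his $T - a - 1$ troops safely on the untouched edge, and after attacking $u_1$ wins $2$ to $1$ in territories. For $k_L = 3$ I would further split on $a_3$: if $a_3 \geqslant 2$ then $a_1 + a_2 + 2 \leqslant T$ and Raj overpowers both early rounds to finish $2$ to $1$ after attacking the two vulnerable Lata vertices; if $a_3 = 1$, so $a_1 + a_2 = T - 1$, an arithmetic check shows Raj has \emph{exactly} $a_2$ troops at the start of round two, so he matches on $v_2$, passes in round three, and attacks $u_1$, ending tied $2$ to $2$ in territories but with $T$ troops to Lata's $T - a < T$, winning the troop tiebreaker.

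I expect the trickiest subcase to be $k_L = 3$, $a_3 = 1$, where the margin shrinks to the troop count and every arithmetic identity needs to line up. The hypothesis $T \geqslant 9$ provides comfortable slack for all the budget inequalities and for the bookkeeping of who passes first in the placement phase. A final check I would do in each subcase is that the attack order is harmless: whenever Raj attacks first he clears Lata's vulnerable vertices unopposed, and whenever Lata attacks first she has no valid target among Raj's safe vertices and must pass, after which Raj completes the clean-up on his turn.
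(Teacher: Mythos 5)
Your strategy is sound and it reaches the stated conclusion, but it is a genuinely different argument from the paper's. The paper's Raj distinguishes a \emph{scary} first response (a single troop on the partner vertex when Lata opens with five or more troops, sacrificing that edge and then overpowering the remaining two) from a \emph{triumphant} one ($f_1(w)+1$ troops when she opens with at most four), always mirrors onto partner vertices afterwards, and wins every line outright on territories via an explicit enumeration of placements; this is where $T \geqslant 9$ enters. Your Raj instead always answers the opening $a$ with $a+1$ on $v_1$, accepts being one troop behind on the other two edges, and in the tight lines ($k_L = 2$ with $a = T-1$, and $k_L = 3$ with $a_3 = 1$) wins not on territories but on the troop tiebreaker: Raj's vertices are never successfully attacked, so he keeps all $T$ troops on the board, while Lata always loses at least the $a \geqslant 1$ troops on $u_1$. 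This is a legitimate and arguably simpler route (it does not appear to need $T \geqslant 9$ at all), at the price of leaning on the secondary winning condition, which the paper's proof never invokes.

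A few spots deserve tightening, though none is fatal. In the sub-case $k_L = 3$ with $a_3 \geqslant 3$, Raj holds $a_3 - 2 \geqslant 1$ troops at round three, can neither overpower nor match, and there is no untouched edge left; since a player may pass only when out of troops or out of empty territories, he is forced to dump the remainder on $v_3$, which breaks your stated invariant that every Raj vertex is safe and hands Lata one valid attack. The outcome you claim still holds ($v_1$ and $v_2$ stay safe, Raj clears $u_1$ and $u_2$, final score $2$--$1$), but the strategy description should say this rather than promise a safe placement, and your closing remark that Lata ``has no valid target'' is not literally true there. Similarly, when $a = T-1$ in the $k_L = 2$ case Raj has nothing left for the untouched edge and wins $1$--$1$ on territories via troops, not $2$--$1$. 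Finally, like the paper, you gloss over lines where Lata withholds troops (e.g.\ placing single troops so as to grab both endpoints of $e_3$ after Raj is spent); these end $2$--$2$ with Raj far ahead on troops, so the claim survives, but a sentence acknowledging them would complete the case analysis.
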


\begin{proof} Let the edges be $\{(w,p),(x,q),(y,r)\}$. The strategy we describe will always involve Raj playing on the other endpoint of a partially played edge. Therefore, we assume WLOG that Lata places troops on the vertices $w$, $x$ and $y$ on her first three turns in that order and to the extent that troops are available, and Raj places troops on the vertices $p$, $q$ and $r$ on his first three turns and to the extent that troops are available. If there are troops remaining and empty locations available after the first three turns for either player, then they play the remaining locations in lexicographic order. We emphasize that these conventions are only a matter of making discussion convenient and they can be employed without loss of generality. 

We now describe Raj's strategy. If Lata places five or more troops on $w$ then Raj places only one troop on $p$. We refer to this as a \emph{scary move} for Raj. On the other hand, if Lata places at most four troops on $w$ then  Raj places $f_1(w)+1$ troops on $p$. We refer to this as a \emph{triumphant move} for Raj. 

First, suppose Raj made a scary move on his first turn. Then he responds on $q$ and $r$ with $f_1(x)+1$ and $f_1(y)+1$ troops respectively. It is straightforward to check that this is always feasible, since a scary move leaves Raj with at least four troops more than Lata to play with. It is also easy to see that in these scenarios Raj wins, since he either has two valid attacks, or he has one valid attack and $f_1(y) = 0$, or he has no valid attacks and $f_1(x) = f_1(y) = 0$. In all cases, even after Lata attacks $p$, Raj claims two territories from the remaining edges after the attack phase, so he has one territory more than Lata at the end of the game. 

Now, suppose Raj made a triumphant move on his first turn. Now we have the following cases. We note that our description is exhaustive assuming Lata plays all her troops. The scenario where she plays fewer than nine troops across her three turns is implicit and can also be easily handled effectively with strategy. In the interest of space, we also omit symmetric scenarios and note that $x/y$ and $q/r$ in all the cases below are interchangeable. 

\begin{itemize}
    \item $f_1(w) = 4$. Lata has five troops remaining after her first move and Raj has four.
        \begin{enumerate}
            \item If {\color{IndianRed}$f_1(x) = 5$} and {\color{SeaGreen}$f_1(y) = 0$}, then {\color{IndianRed}$f_2(q) = 1$} and {\color{SeaGreen}$f_2(r) = 1$}.
            \item If {\color{IndianRed}$f_1(x) = 4$} and {\color{SeaGreen}$f_1(y) = 1$}, then {\color{IndianRed}$f_2(q) = 1$} and  {\color{SeaGreen}$f_2(r) = 2$}. 
            \item If {\color{IndianRed}$f_1(x) = 3$} and {\color{SeaGreen}$f_1(y) = 2$}, then {\color{IndianRed}$f_2(q) = 1$} and  {\color{SeaGreen}$f_2(r) = 3$}. 
        \end{enumerate}
    \item $f_1(w) = 3$. Lata has six troops remaining after her first move and Raj has five.    
    \begin{enumerate}
        \item If {\color{IndianRed}$f_1(x) = 6$} and {\color{SeaGreen}$f_1(y) = 0$}, then {\color{IndianRed}$f_2(q) = 1$} and {\color{SeaGreen}$f_2(r) = 1$}.
        \item If {\color{IndianRed}$f_1(x) = 5$} and {\color{SeaGreen}$f_1(y) = 1$}, then {\color{IndianRed}$f_2(q) = 1$} and  {\color{SeaGreen}$f_2(r) = 2$}. 
        \item If {\color{IndianRed}$f_1(x) = 4$} and {\color{SeaGreen}$f_1(y) = 2$}, then {\color{IndianRed}$f_2(q) = 1$} and  {\color{SeaGreen}$f_2(r) = 3$}.
        \item If {\color{SeaGreen}$f_1(x) = 3$} and {\color{IndianRed}$f_1(y) = 3$}, then {\color{SeaGreen}$f_2(q) = 4$} and  {\color{IndianRed}$f_2(r) = 1$}. 
    \end{enumerate}

    \item $f_1(w) = 2$. Lata has seven troops remaining after her first move and Raj has six. 
    \begin{enumerate}
        \item If {\color{IndianRed}$f_1(x) = 7$} and {\color{SeaGreen}$f_1(y) = 0$}, then {\color{IndianRed}$f_2(q) = 1$} and {\color{SeaGreen}$f_2(r) = 1$}.
        \item If {\color{IndianRed}$f_1(x) = 6$} and {\color{SeaGreen}$f_1(y) = 1$}, then {\color{IndianRed}$f_2(q) = 1$} and {\color{SeaGreen} $f_2(r) = 2$}. 
        \item If {\color{IndianRed}$f_1(x) = 5$} and {\color{SeaGreen}$f_1(y) = 2$}, then {\color{IndianRed}$f_2(q) = 1$} and  {\color{SeaGreen}$f_2(r) = 3$}.
        \item If {\color{IndianRed}$f_1(x) = 4$} and {\color{SeaGreen}$f_1(y) = 3$}, then {\color{IndianRed}$f_2(q) = 1$} and  {\color{SeaGreen}$f_2(r) = 4$}. 
    \end{enumerate}
    \item $f_1(w) = 1$. Lata has eight troops remaining after her first move and Raj has seven.
    \begin{enumerate}
        \item If {\color{IndianRed}$f_1(x) = 8$} and {\color{SeaGreen}$f_1(y) = 0$}, then {\color{IndianRed}$f_2(q) = 1$} and {\color{SeaGreen}$f_2(r) = 1$}.
        \item If {\color{IndianRed}$f_1(x) = 7$} and {\color{SeaGreen}$f_1(y) = 1$}, then {\color{IndianRed}$f_2(q) = 1$} and  {\color{SeaGreen}$f_2(r) = 2$}. 
        \item If {\color{IndianRed}$f_1(x) = 6$} and {\color{SeaGreen}$f_1(y) = 2$}, then {\color{IndianRed}$f_2(q) = 1$} and  {\color{SeaGreen}$f_2(r) = 3$}.
        \item If {\color{IndianRed}$f_1(x) = 5$} and {\color{SeaGreen}$f_1(y) = 3$}, then {\color{IndianRed}$f_2(q) = 1$} and  {\color{SeaGreen}$f_2(r) = 4$}. 
        \item If {\color{SeaGreen}$f_1(x) = 4$} and {\color{IndianRed}$f_1(y) = 4$}, then {\color{SeaGreen}$f_2(q) = 5$} and  {\color{IndianRed}$f_2(r) = 1$}. 
    \end{enumerate}
\end{itemize}
Observe that in all scenarios, both Raj and Lata have one valid attack each, which combined with the attack that Raj has from his (triumphant) first move, leads to a win for Raj. This concludes our argument.
\end{proof}

\begin{lemma}
    \label{lem:4edges10troops-win}
    Assume Lata and Raj have~$T$ troops each. If~$G$ is a disjoint union of four edges and~$T \geqslant 10$, then Raj has a winning strategy.
\end{lemma}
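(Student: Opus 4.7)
The plan is to mirror the strategy developed in the proof of \Cref{lem:3edges9troops-win}, extending the scary/triumphant framework from three edges to four. Label the edges $(w,p)$, $(x,q)$, $(y,r)$, and $(z,s)$. As in the three-edge case, we adopt without loss of generality the convention that Lata places troops on $w, x, y, z$ in order (to the extent she has troops available), while Raj responds on $p, q, r, s$. Raj's first move is determined by $f_1(w)$: he makes a \emph{scary} move of one troop on $p$ if $f_1(w) \geq 5$, and a \emph{triumphant} move of $f_1(w) + 1$ troops on $p$ if $f_1(w) \leq 4$.

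In the scary branch, Raj responds to Lata's subsequent placements of $a$, $b$, $c$ troops on $x, y, z$ (where any of these may be zero if Lata has already exhausted her troops) by placing $a+1$, $b+1$, $c+1$ troops on the mirror endpoint, capturing each of those edges. The feasibility check is $1 + (a+1) + (b+1) + (c+1) \leq T$, i.e., $a+b+c \leq T-4$, which holds since $a+b+c \leq T - f_1(w) \leq T - 5$. Any leftover Raj troops are spent one at a time on still-empty vertices. After placement, the only vulnerable vertices are $p$ (Raj's) and those among $x, y, z$ that Lata actually occupies. Since Lata exhausts her troops first (by $f_1(w) - 4 \geq 1$ troops), she leads the attack phase but can eliminate only $p$; Raj then eliminates each of $x, y, z$ Lata occupies, yielding a strong win with Raj holding at least three more territories than Lata loses to the single attack on $p$.

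In the triumphant branch, Raj's first move captures $(w,p)$. The remainder of the argument is a finite case enumeration over $f_1(w) \in \{1, 2, 3, 4\}$ and, within each, over Lata's distributions of her remaining $T - f_1(w)$ troops across $x, y, z$ (including partial distributions where she does not play on all three remaining edges). For each subcase, we prescribe a mix of scary and triumphant responses for Raj on $q, r, s$, with residual troops absorbed into the still-empty vertices. In each subcase, Raj's response is chosen so that the net effect of the attack phase is at least one more successful attack for Raj than for Lata, giving the required win.

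The main obstacle is the sheer size of the enumeration in the triumphant branch: the extra edge expands the case split meaningfully compared to \Cref{lem:3edges9troops-win}, and for each subcase one must verify both the placement budget and the bookkeeping of who leads the attack phase. The hypothesis $T \geq 10$ is calibrated precisely to supply the necessary slack in every branch, mirroring the role that $T \geq 9$ played in the three-edge setting.
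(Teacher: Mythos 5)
Your setup (labels, conventions, the scary/triumphant dichotomy keyed to $f_1(w)\leq 4$ versus $f_1(w)\geq 5$) matches the paper's, and your scary branch is essentially complete and correct: the feasibility count $1+(a+1)+(b+1)+(c+1)\leq T$ and the observation that only $p$ and Lata's occupied vertices among $x,y,z$ are vulnerable give Raj three of the last three edges against Lata's single attack on $p$, just as in the paper. The problem is the triumphant branch, which is where essentially all of the work in this lemma lives. You reduce it to ``a finite case enumeration over $f_1(w)\in\{1,2,3,4\}$ and Lata's distributions of her remaining troops, in each of which we prescribe a mix of scary and triumphant responses so that Raj gets at least one more successful attack than Lata'' --- but you never exhibit these responses, nor give a rule, invariant, or counting argument showing they always exist within Raj's remaining budget. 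That existence claim is exactly the content of the lemma: after a triumphant first move Raj has one troop fewer than Lata for the last three edges, and the whole difficulty is verifying that he can nevertheless draw or win on those three edges (so that the banked attack on $w$ tips the game). Asserting the conclusion of the enumeration without performing it leaves the proof incomplete; the paper's proof consists precisely of that explicit table of responses for every distribution of Lata's troops.

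A secondary inaccuracy: the prescription ``a mix of scary and triumphant responses on $q,r,s$'' does not actually cover all the responses Raj needs. In several configurations (for instance, when Lata splits her remaining troops fairly evenly, such as $f_1(w)=4$, $f_1(x)=f_1(y)=f_1(z)=2$), Raj cannot afford to capture two more edges and instead must \emph{neutralize} an edge by matching Lata's troop count exactly, producing a draw on that edge rather than an attack --- a move that is neither scary (one troop) nor triumphant ($f_1(\cdot)+1$ troops). So even as a recipe for generating the enumeration, your description would fail in some subcases; you would need to add this equalizing move type and then verify, case by case (or via a general budget argument you have not supplied), that Raj always ends the last three edges at worst even while keeping his extra attack on $(w,p)$.
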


\begin{proof} 
Let the edges be $\{(w,p),(x,q),(y,r),(z,s)\}$. The strategy we describe will always involve Raj playing on the other endpoint of a partially played edge. Therefore, we assume WLOG that Lata places troops on the vertices $w$, $x$, $y$ and $z$ on her first four turns in that order and to the extent that troops are available, and Raj places troops on the vertices $p$, $q$, $r$ and $s$ on his first four turns and to the extent that troops are available. The number of troops played by Lata in her turn on the vertex $i$ is denoted by $f_1(i)$.

Similar to the proof of \Cref{lem:3edges9troops-win}, if Lata plays five or more troops on $w$, then Raj plays only one troop on $p$. We refer to this as a \emph{scary move} for Raj. On the other hand, if Lata places at most four troops on $w$ then  Raj places $f_1(w)+1$ troops on $p$. We refer to this as a \emph{triumphant move} for Raj. 

First, suppose Raj made a scary move on his first turn. Then he responds on $q$, $r$ and $s$ with $f_1(x)+1$, $f_1(y)+1$ and $f_1(z)+1$ troops respectively. It is straightforward to check that this is always feasible, since a scary move leaves Raj with at least four troops more than Lata to play with. It is also easy to see that in these scenarios Raj wins, since he either has three valid attacks, or he has two valid attack and $f_1(z) = 0$, or he has one valid attack and $f_1(x) = f_1(y) =0$; or he has no valid attack and $f_1(x)=f_1(y)=f_1(z)=0$. In all cases, even after Lata attacks $p$, Raj claims three territories from the remaining edges after the attack phase, so he has two territories more than Lata at the end of the game. 

Now, suppose Raj made a triumphant move on his first turn. Now we have the following cases. We note that our description is exhaustive assuming Lata plays all her troops. The scenario where she plays fewer than nine troops across her four turns is implicit and can also be easily handled effectively with strategy. In the interest of space, we also omit symmetric scenarios and note that $y/z$ and $r/s$ in all the cases below are interchangeable. 

\begin{itemize}
    \item $f_1(w) = 4$. Lata has six troops remaining after her first move and Raj has five.
        \begin{enumerate}
           \item If {\color{IndianRed}$f_1(x) = 6$} and {\color{SeaGreen}$f_1(y)=f_1(z) = 0$}, then {\color{IndianRed}$f_2(q) = 1$} and {\color{SeaGreen}$f_2(r)=f_2(s) = 1$}.
            \item If {\color{IndianRed}$f_1(x) = 5$} ,{\color{SeaGreen}$f_1(y) = 1$} , and {\color{SeaGreen}$f_1(z)=0$}; then {\color{IndianRed}$f_2(q) = 1$} and {\color{SeaGreen}$f_2(r) = 2$} , {\color{SeaGreen}$f_2(s)=1$}.
            \item If {\color{IndianRed}$f_1(x) = 4$} ,{\color{SeaGreen}$f_1(y) = 2$}, and {\color{SeaGreen}$f_2(z)=0$}, then {\color{IndianRed}$f_2(q) = 1$} and  {\color{SeaGreen}$f_2(r) = 3$} , {\color{SeaGreen}$f_2(s)=1$}. 
             \item If {\color{IndianRed}$f_1(x) = 4$} ,{\color{SeaGreen}$f_1(y) = 1$,} and
             {\color{SeaGreen}$f_2(z)=1$}, then {\color{IndianRed}$f_2(q) = 1$} , {\color{SeaGreen}$f_2(r) = 2$} and {\color{SeaGreen}$f_2(s)=2$}. 
             \item If {\color{IndianRed}$f_1(x) = 3$} ,{\color{SeaGreen}$f_1(y) = 3$,} and
             {\color{DodgerBlue}$f_2(z)=0$}, then {\color{IndianRed}$f_2(q) = 1$} , {\color{SeaGreen}$f_2(r) = 4$} and {\color{DodgerBlue}$f_2(s)=0$}. 
              \item If {\color{IndianRed}$f_1(x) = 3$} ,{\color{SeaGreen}$f_1(y) = 2$,} and
             {\color{DodgerBlue}$f_2(z)=1$}, then {\color{IndianRed}$f_2(q) = 1$} , {\color{SeaGreen}$f_2(r) = 3$} and {\color{DodgerBlue}$f_2(s)=1$}. 
             \item If {\color{SeaGreen}$f_1(x) = 2$} ,{\color{IndianRed}$f_1(y) = 4$,} and
             {\color{SeaGreen}$f_2(s)=0$}, then {\color{SeaGreen}$f_2(q) = 3$} , {\color{IndianRed}$f_2(r) = 1$} and {\color{SeaGreen}$f_2(s)=1$}. 
             \item If {\color{SeaGreen}$f_1(x) = 2$} ,{\color{IndianRed}$f_1(y) = 3$,} and
             {\color{DodgerBlue}$f_2(s)=1$}, then {\color{SeaGreen}$f_2(q) = 3$} , {\color{IndianRed}$f_2(r) = 1$} and {\color{DodgerBlue}$f_2(s)=1$}.
             \item If {\color{SeaGreen}$f_1(x) = 2$} ,{\color{DodgerBlue}$f_1(y) = 2$,} and
             {\color{IndianRed}$f_2(z)=2$}, then {\color{SeaGreen}$f_2(q) = 3$} , {\color{DodgerBlue}$f_2(r) = 2$} and {\color{IndianRed}$f_2(s)=0$}.
             \item If {\color{SeaGreen}$f_1(x) = 1$} ,{\color{IndianRed}$f_1(y) = 5$,} and
             {\color{SeaGreen}$f_2(z)=0$}, then {\color{SeaGreen}$f_2(q) = 2$} , {\color{IndianRed}$f_2(r) = 1$} and {\color{SeaGreen}$f_2(s)=1$}.
             \item If {\color{SeaGreen}$f_1(x) = 1$} ,{\color{IndianRed}$f_1(y) = 4$,} and
             {\color{SeaGreen}$f_2(z)=1$}, then {\color{SeaGreen}$f_2(q) = 2$} , {\color{IndianRed}$f_2(r) = 1$} and {\color{SeaGreen}$f_2(s)=2$}.
             \item If {\color{SeaGreen}$f_1(x) = 1$} ,{\color{IndianRed}$f_1(y) = 3$,} and
             {\color{DodgerBlue}$f_2(z)=2$}, then {\color{SeaGreen}$f_2(q) = 2$} , {\color{IndianRed}$f_2(r) = 1$} and {\color{DodgerBlue}$f_2(s)=2$}.    
        \end{enumerate}
    \item $f_1(w) = 3$. Lata has seven troops remaining after her first move and Raj has six.    
    \begin{enumerate}
        \item If {\color{IndianRed}$f_1(x) = 7$} and {\color{SeaGreen}$f_1(y)=f_1(z) = 0$}, then {\color{IndianRed}$f_2(q) = 1$} and {\color{SeaGreen}$f_2(r)=f_2(s) = 1$}.
            \item If {\color{IndianRed}$f_1(x) = 6$} ,{\color{SeaGreen}$f_1(y) = 1$} , and {\color{SeaGreen}$f_1(z)=0$}; then {\color{IndianRed}$f_2(q) = 1$} and {\color{SeaGreen}$f_2(r) = 2$} , {\color{SeaGreen}$f_2(s)=1$}.
            \item If {\color{IndianRed}$f_1(x) = 5$} ,{\color{SeaGreen}$f_1(y) = 1$} , and {\color{SeaGreen}$f_1(z)=1$}; then {\color{IndianRed}$f_2(q) = 1$} and {\color{SeaGreen}$f_2(r) = 2$} , {\color{SeaGreen}$f_2(s)=2$}.
            \item If {\color{IndianRed}$f_1(x) = 4$} ,{\color{SeaGreen}$f_1(y) = 3$}, and {\color{SeaGreen}$f_2(z)=0$}, then {\color{IndianRed}$f_2(q) = 1$} and  {\color{SeaGreen}$f_2(r) = 4$} , {\color{SeaGreen}$f_2(s)=1$}. 
             \item If {\color{IndianRed}$f_1(x) = 4$} ,{\color{SeaGreen}$f_1(y) = 2$,} and
             {\color{SeaGreen}$f_2(z)=1$}, then {\color{IndianRed}$f_2(q) = 1$} , {\color{SeaGreen}$f_2(r) = 3$} and {\color{SeaGreen}$f_2(s)=2$}. 
             \item If {\color{SeaGreen}$f_1(x) = 3$} ,{\color{IndianRed}$f_1(y) = 4$,} and
             {\color{SeaGreen}$f_2(z)=0$}, then {\color{SeaGreen}$f_2(q) = 4$} , {\color{IndianRed}$f_2(r) = 1$} and {\color{SeaGreen}$f_2(s)=1$}. 
             \item If {\color{SeaGreen}$f_1(x) = 3$} ,{\color{IndianRed}$f_1(y) = 3$,} and
             {\color{DodgerBlue}$f_2(z)=1$}, then {\color{SeaGreen}$f_2(q) = 4$} , {\color{IndianRed}$f_2(r) = 1$} and {\color{DodgerBlue}$f_2(s)=1$}. 
             \item If {\color{SeaGreen}$f_1(x) = 3$} ,{\color{DodgerBlue}$f_1(y) = 2$,} and
             {\color{IndianRed}$f_2(z)=2$}, then {\color{SeaGreen}$f_2(q) = 4$} , {\color{DodgerBlue}$f_2(r) = 2$} and {\color{IndianRed}$f_2(s)=0$}. 
             \item If {\color{SeaGreen}$f_1(x) = 2$} ,{\color{IndianRed}$f_1(y) = 5$,} and
             {\color{SeaGreen}$f_2(z)=0$}, then {\color{SeaGreen}$f_2(q) = 3$} , {\color{IndianRed}$f_2(r) = 1$} and {\color{SeaGreen}$f_2(s)=1$}. 
             \item If {\color{SeaGreen}$f_1(x) = 2$} ,{\color{IndianRed}$f_1(y) = 4$,} and
             {\color{SeaGreen}$f_2(z)=1$}, then {\color{SeaGreen}$f_2(q) = 3$} , {\color{IndianRed}$f_2(r) = 1$} and {\color{SeaGreen}$f_2(s)=2$}.
             \item If {\color{SeaGreen}$f_1(x) = 2$} ,{\color{IndianRed}$f_1(y) = 3$,} and
             {\color{DodgerBlue}$f_2(z)=2$}, then {\color{SeaGreen}$f_2(q) = 3$} , {\color{IndianRed}$f_2(r) = 1$} and {\color{DodgerBlue}$f_2(s)=2$}.
             \item If {\color{SeaGreen}$f_1(x) = 1$} ,{\color{IndianRed}$f_1(y) = 6$,} and
             {\color{SeaGreen}$f_2(z)=0$}, then {\color{SeaGreen}$f_2(q) = 2$} , {\color{IndianRed}$f_2(r) = 1$} and {\color{SeaGreen}$f_2(s)=1$}.
             \item If {\color{SeaGreen}$f_1(x) = 1$} ,{\color{IndianRed}$f_1(y) = 5$,} and
             {\color{SeaGreen}$f_2(z)=1$}, then {\color{SeaGreen}$f_2(q) = 2$} , {\color{IndianRed}$f_2(r) = 1$} and {\color{SeaGreen}$f_2(s)=2$}.
             \item If {\color{SeaGreen}$f_1(x) = 1$} ,{\color{IndianRed}$f_1(y) = 4$,} and
             {\color{DodgerBlue}$f_2(z)=2$}, then {\color{SeaGreen}$f_2(q) = 2$} , {\color{IndianRed}$f_2(r) = 1$} and {\color{DodgerBlue}$f_2(s)=2$}. 
             \item If {\color{SeaGreen}$f_1(x) = 1$} ,{\color{SeaGreen}$f_1(y) = 3$,} and
             {\color{IndianRed}$f_2(z)=3$}, then {\color{SeaGreen}$f_2(q) = 2$} , {\color{SeaGreen}$f_2(r) = 3$} and {\color{IndianRed}$f_2(s)=0$}. 
    \end{enumerate}

    \item $f_1(w) = 2$. Lata has eight troops remaining after her first move and Raj has seven. 
    \begin{enumerate}
            \item If {\color{IndianRed}$f_1(x) = 8$} and {\color{SeaGreen}$f_1(y)=f_1(z) = 0$}, then {\color{IndianRed}$f_2(q) = 1$} and {\color{SeaGreen}$f_2(r)=f_2(s) = 1$}.
            \item If {\color{IndianRed}$f_1(x) = 7$} ,{\color{SeaGreen}$f_1(y) = 1$} , and {\color{SeaGreen}$f_1(z)=0$}; then {\color{IndianRed}$f_2(q) = 1$} and {\color{SeaGreen}$f_2(r) = 2$} , {\color{SeaGreen}$f_2(s)=1$}.
            \item If {\color{IndianRed}$f_1(x) = 6$} ,{\color{SeaGreen}$f_1(y) = 2$} , and {\color{SeaGreen}$f_1(z)=0$}; then {\color{IndianRed}$f_2(q) = 1$} and {\color{SeaGreen}$f_2(r) = 3$} , {\color{SeaGreen}$f_2(s)=1$}.
             \item If {\color{IndianRed}$f_1(x) = 6$} ,{\color{SeaGreen}$f_1(y) = 1$} , and {\color{SeaGreen}$f_1(z)=1$}; then {\color{IndianRed}$f_2(q) = 1$} and {\color{SeaGreen}$f_2(r) = 2$} , {\color{SeaGreen}$f_2(s)=2$}.
            \item If {\color{IndianRed}$f_1(x) = 5$} ,{\color{SeaGreen}$f_1(y) = 3$} , and {\color{SeaGreen}$f_1(z)=0$}; then {\color{IndianRed}$f_2(q) = 1$} and {\color{SeaGreen}$f_2(r) = 4$} , {\color{SeaGreen}$f_2(s)=1$}. 
            \item If {\color{IndianRed}$f_1(x) = 5$} ,{\color{SeaGreen}$f_1(y) = 2$} , and {\color{SeaGreen}$f_1(z)=1$}; then {\color{IndianRed}$f_2(q) = 1$} and {\color{SeaGreen}$f_2(r) = 3$} , {\color{SeaGreen}$f_2(s)=2$}. 
            \item If {\color{IndianRed}$f_1(x) = 4$} ,{\color{SeaGreen}$f_1(y) = 4$}, and {\color{SeaGreen}$f_2(z)=0$}, then {\color{IndianRed}$f_2(q) = 1$} and  {\color{SeaGreen}$f_2(r) = 5$} , {\color{SeaGreen}$f_2(s)=1$}. 
            \item If {\color{IndianRed}$f_1(x) = 4$} ,{\color{SeaGreen}$f_1(y) = 3$,} and
            {\color{SeaGreen}$f_2(z)=1$}, then {\color{IndianRed}$f_2(q) = 1$} , {\color{SeaGreen}$f_2(r) = 4$} and {\color{SeaGreen}$f_2(s)=2$}. 
            \item If {\color{IndianRed}$f_1(x) = 4$} ,{\color{SeaGreen}$f_1(y) = 2$,} and
            {\color{SeaGreen}$f_2(z)=2$}, then {\color{IndianRed}$f_2(q) = 1$} , {\color{SeaGreen}$f_2(r) = 3$} and {\color{SeaGreen}$f_2(s)=3$}. 
            \item If {\color{SeaGreen}$f_1(x) = 3$} ,{\color{IndianRed}$f_1(y) = 5$,} and
            {\color{SeaGreen}$f_2(z)=0$}, then {\color{SeaGreen}$f_2(q) = 4$} , {\color{IndianRed}$f_2(r) = 1$} and {\color{SeaGreen}$f_2(s)=1$}. 
            \item If {\color{SeaGreen}$f_1(x) = 3$} ,{\color{IndianRed}$f_1(y) = 4$,} and
            {\color{SeaGreen}$f_2(s)=1$}, then {\color{SeaGreen}$f_2(q) = 4$} , {\color{IndianRed}$f_2(r) = 1$} and {\color{SeaGreen}$f_2(s)=2$}. 
            \item If {\color{SeaGreen}$f_1(x) = 3$} ,{\color{IndianRed}$f_1(y) = 3$,} and
             {\color{DodgerBlue}$f_2(z)=2$}, then {\color{SeaGreen}$f_2(q) = 4$} , {\color{IndianRed}$f_2(r) = 1$} and {\color{DodgerBlue}$f_2(s)=2$}. 
             \item If {\color{SeaGreen}$f_1(x) = 2$} ,{\color{IndianRed}$f_1(y) = 6$,} and
             {\color{SeaGreen}$f_2(z)=0$}, then {\color{SeaGreen}$f_2(q) = 3$} , {\color{IndianRed}$f_2(r) = 1$} and {\color{SeaGreen}$f_2(s)=1$}. 
             \item If {\color{SeaGreen}$f_1(x) = 2$} ,{\color{IndianRed}$f_1(y) = 5$,} and
             {\color{SeaGreen}$f_2(z)=1$}, then {\color{SeaGreen}$f_2(q) = 3$} , {\color{IndianRed}$f_2(r) = 1$} and {\color{SeaGreen}$f_2(s)=2$}.
             \item If {\color{SeaGreen}$f_1(x) = 2$} ,{\color{IndianRed}$f_1(y) = 4$,} and
             {\color{SeaGreen}$f_2(z)=2$}, then {\color{SeaGreen}$f_2(q) = 3$} , {\color{IndianRed}$f_2(r) = 1$} and {\color{SeaGreen}$f_2(s)=3$}.
             \item If {\color{SeaGreen}$f_1(x) = 2$} ,{\color{IndianRed}$f_1(y) = 3$,} and
             {\color{DodgerBlue}$f_2(z)=3$}, then {\color{SeaGreen}$f_2(q) = 3$} , {\color{IndianRed}$f_2(r) = 1$} and {\color{DodgerBlue}$f_2(s)=3$}.
             \item If {\color{SeaGreen}$f_1(x) = 1$} ,{\color{IndianRed}$f_1(y) = 7$,} and
             {\color{SeaGreen}$f_2(z)=0$}, then {\color{SeaGreen}$f_2(q) = 2$} , {\color{IndianRed}$f_2(r) = 1$} and {\color{SeaGreen}$f_2(s)=1$}.
             \item If {\color{SeaGreen}$f_1(x) = 1$} ,{\color{IndianRed}$f_1(y) = 6$,} and
             {\color{SeaGreen}$f_2(z)=1$}, then {\color{SeaGreen}$f_2(q) = 2$} , {\color{IndianRed}$f_2(r) = 1$} and {\color{SeaGreen}$f_2(s)=2$}.
             \item If {\color{SeaGreen}$f_1(x) = 1$} ,{\color{IndianRed}$f_1(y) = 5$,} and
             {\color{SeaGreen}$f_2(z)=2$}, then {\color{SeaGreen}$f_2(q) = 2$} , {\color{IndianRed}$f_2(r) = 1$} and {\color{SeaGreen}$f_2(s)=3$}. 
             \item If {\color{SeaGreen}$f_1(x) = 1$} ,{\color{SeaGreen}$f_1(y) = 4$,} and
             {\color{IndianRed}$f_2(s)=3$}, then {\color{SeaGreen}$f_2(q) = 2$} , {\color{SeaGreen}$f_2(r) = 5$} and {\color{IndianRed}$f_2(s)=0$}. 
             
    \end{enumerate}
    \item $f_1(w) = 1$. Lata has nine troops remaining after her first move and Raj has eight.
    \begin{enumerate}
        \item If {\color{IndianRed}$f_1(x) = 9$} and {\color{SeaGreen}$f_1(y)=f_1(z) = 0$}, then {\color{IndianRed}$f_2(q) = 1$} and {\color{SeaGreen}$f_2(r)=f_2(s) = 1$}.
            \item If {\color{IndianRed}$f_1(x) = 8$} ,{\color{SeaGreen}$f_1(y) = 1$} , and {\color{SeaGreen}$f_1(z)=0$}; then {\color{IndianRed}$f_2(q) = 1$} and {\color{SeaGreen}$f_2(r) = 2$} , {\color{SeaGreen}$f_2(s)=1$}.
            \item If {\color{IndianRed}$f_1(x) = 7$} ,{\color{SeaGreen}$f_1(y) = 2$} , and {\color{SeaGreen}$f_1(z)=0$}; then {\color{IndianRed}$f_2(q) = 1$} and {\color{SeaGreen}$f_2(r) = 3$} , {\color{SeaGreen}$f_2(s)=1$}.
             \item If {\color{IndianRed}$f_1(x) = 7$} ,{\color{SeaGreen}$f_1(y) = 1$} , and {\color{SeaGreen}$f_1(z)=1$}; then {\color{IndianRed}$f_2(q) = 1$} and {\color{SeaGreen}$f_2(r) = 2$} , {\color{SeaGreen}$f_2(s)=2$}.
            \item If {\color{IndianRed}$f_1(x) = 6$} ,{\color{SeaGreen}$f_1(y) = 3$} , and {\color{SeaGreen}$f_1(z)=0$}; then {\color{IndianRed}$f_2(q) = 1$} and {\color{SeaGreen}$f_2(r) = 4$} , {\color{SeaGreen}$f_2(s)=1$}. 
            \item If {\color{IndianRed}$f_1(x) = 6$} ,{\color{SeaGreen}$f_1(y) = 2$} , and {\color{SeaGreen}$f_1(z)=1$}; then {\color{IndianRed}$f_2(q) = 1$} and {\color{SeaGreen}$f_2(r) = 3$} , {\color{SeaGreen}$f_2(s)=2$}. 
            \item If {\color{IndianRed}$f_1(x) = 5$} ,{\color{SeaGreen}$f_1(y) = 4$}, and {\color{SeaGreen}$f_2(z)=0$}, then {\color{IndianRed}$f_2(q) = 1$} and  {\color{SeaGreen}$f_2(r) = 5$} , {\color{SeaGreen}$f_2(s)=1$}. 
            \item If {\color{IndianRed}$f_1(x) = 5$} ,{\color{SeaGreen}$f_1(y) = 3$,} and
            {\color{SeaGreen}$f_2(z)=1$}, then {\color{IndianRed}$f_2(q) = 1$} , {\color{SeaGreen}$f_2(r) = 4$} and {\color{SeaGreen}$f_2(s)=2$}. 
            \item If {\color{IndianRed}$f_1(x) = 5$} ,{\color{SeaGreen}$f_1(y) = 2$,} and
            {\color{SeaGreen}$f_2(z)=2$}, then {\color{IndianRed}$f_2(q) = 1$} , {\color{SeaGreen}$f_2(r) = 3$} and {\color{SeaGreen}$f_2(s)=3$}. 
            \item If {\color{SeaGreen}$f_1(x) = 4$} ,{\color{IndianRed}$f_1(y) = 5$,} and
            {\color{SeaGreen}$f_2(z)=0$}, then {\color{SeaGreen}$f_2(q) = 5$} , {\color{IndianRed}$f_2(r) = 1$} and {\color{SeaGreen}$f_2(s)=1$}. 
            \item If {\color{SeaGreen}$f_1(x) = 4$} ,{\color{IndianRed}$f_1(y) = 4$,} and
            {\color{SeaGreen}$f_2(z)=1$}, then {\color{SeaGreen}$f_2(q) = 5$} , {\color{IndianRed}$f_2(r) = 1$} and {\color{SeaGreen}$f_2(s)=2$}. 
            \item If {\color{SeaGreen}$f_1(x) = 4$} ,{\color{IndianRed}$f_1(y) = 3$,} and
             {\color{DodgerBlue}$f_2(z)=2$}, then {\color{SeaGreen}$f_2(q) = 5$} , {\color{IndianRed}$f_2(r) = 1$} and {\color{DodgerBlue}$f_2(s)=2$}. 
             \item If {\color{SeaGreen}$f_1(x) = 3$} ,{\color{IndianRed}$f_1(y) = 6$,} and
             {\color{SeaGreen}$f_2(z)=0$}, then {\color{SeaGreen}$f_2(q) = 4$} , {\color{IndianRed}$f_2(r) = 1$} and {\color{SeaGreen}$f_2(s)=1$}. 
             \item If {\color{SeaGreen}$f_1(x) = 3$} ,{\color{IndianRed}$f_1(y) = 5$,} and
             {\color{SeaGreen}$f_2(z)=1$}, then {\color{SeaGreen}$f_2(q) = 3$} , {\color{IndianRed}$f_2(r) = 1$} and {\color{SeaGreen}$f_2(s)=2$}.
             \item If {\color{SeaGreen}$f_1(x) = 3$} ,{\color{IndianRed}$f_1(y) = 4$,} and
             {\color{SeaGreen}$f_2(z)=2$}, then {\color{SeaGreen}$f_2(q) = 3$} , {\color{IndianRed}$f_2(r) = 1$} and {\color{SeaGreen}$f_2(s)=3$}.
             \item If {\color{SeaGreen}$f_1(x) = 3$} ,{\color{IndianRed}$f_1(y) = 3$,} and
             {\color{DodgerBlue}$f_2(z)=3$}, then {\color{SeaGreen}$f_2(q) = 4$} , {\color{IndianRed}$f_2(r) = 1$} and {\color{DodgerBlue}$f_2(s)=3$}.
             \item If {\color{SeaGreen}$f_1(x) = 2$} ,{\color{IndianRed}$f_1(y) = 7$,} and
             {\color{SeaGreen}$f_2(z)=0$}, then {\color{SeaGreen}$f_2(q) = 3$} , {\color{IndianRed}$f_2(r) = 1$} and {\color{SeaGreen}$f_2(s)=1$}.
             \item If {\color{SeaGreen}$f_1(x) = 2$} ,{\color{IndianRed}$f_1(y) = 6$,} and
             {\color{SeaGreen}$f_2(z)=1$}, then {\color{SeaGreen}$f_2(q) = 3$} , {\color{IndianRed}$f_2(r) = 1$} and {\color{SeaGreen}$f_2(s)=2$}.
             \item If {\color{SeaGreen}$f_1(x) = 2$} ,{\color{IndianRed}$f_1(y) = 5$,} and
             {\color{SeaGreen}$f_2(z)=2$}, then {\color{SeaGreen}$f_2(q) = 3$} , {\color{IndianRed}$f_2(r) = 1$} and {\color{SeaGreen}$f_2(s)=3$}. 
             \item If {\color{SeaGreen}$f_1(x) = 2$} ,{\color{SeaGreen}$f_1(y) = 4$,} and
             {\color{IndianRed}$f_2(z)=3$}, then {\color{SeaGreen}$f_2(q) = 3$} , {\color{SeaGreen}$f_2(r) = 5$} and {\color{IndianRed}$f_2(s)=0$}. 
             \item If {\color{SeaGreen}$f_1(x) = 1$} ,{\color{IndianRed}$f_1(y) = 8$,} and
             {\color{SeaGreen}$f_2(z)=0$}, then {\color{SeaGreen}$f_2(q) = 2$} , {\color{IndianRed}$f_2(r) = 1$} and {\color{SeaGreen}$f_2(s)=1$}.
             \item If {\color{SeaGreen}$f_1(x) = 1$} ,{\color{IndianRed}$f_1(y) = 7$,} and
             {\color{SeaGreen}$f_2(z)=1$}, then {\color{SeaGreen}$f_2(q) = 2$} , {\color{IndianRed}$f_2(r) = 1$} and {\color{SeaGreen}$f_2(s)=2$}.
             \item If {\color{SeaGreen}$f_1(x) = 1$} ,{\color{IndianRed}$f_1(y) = 6$,} and
             {\color{SeaGreen}$f_2(z)=2$}, then {\color{SeaGreen}$f_2(q) = 2$} , {\color{IndianRed}$f_2(r) = 1$} and {\color{SeaGreen}$f_2(s)=3$}. 
             \item If {\color{SeaGreen}$f_1(x) = 1$} ,{\color{IndianRed}$f_1(y) = 5$,} and
             {\color{SeaGreen}$f_2(z)=3$}, then {\color{SeaGreen}$f_2(q) = 2$} , {\color{IndianRed}$f_2(r) = 1$} and {\color{SeaGreen}$f_2(s)=4$}. 
             \item If {\color{SeaGreen}$f_1(x) = 1$} ,{\color{DodgerBlue}$f_1(y) = 4$,} and
             {\color{IndianRed}$f_2(s)=4$}, then {\color{SeaGreen}$f_2(q) = 2$} , {\color{IndianRed}$f_2(r) = 4$} and {\color{DodgerBlue}$f_2(s)=0$}. 
    \end{enumerate}
\end{itemize}
It can be seen that for all possible plays of Lata, Raj either draws or wins on the last three edges. Since Raj has already made a triumphant move, Raj has an extra attack on the edge $wp$. Thus Raj wins.
\end{proof}

\begin{lemma}
    \label{lem:4edges10troops-strongwin}
    Assume Lata and Raj have~$T_R$ and~$T_L$ troops respectively. If~$G$ is a disjoint union of four edges,~$T_R \geqslant 10$ and~$T_L \leqslant 9$, then Raj has a strongly winning strategy.
\end{lemma}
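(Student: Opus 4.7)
The plan is to build on the case analysis of \Cref{lem:4edges10troops-win}, exploiting the one-troop surplus $T_R - T_L \geqslant 1$ to sharpen Raj's 1-territory win into a 2-territory (strong) win. We adopt the same scary/triumphant dichotomy for Raj's first move, based on whether $f_1(w) \geqslant 5$ or $f_1(w) \leqslant 4$.

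\textbf{Scary opening.} If $f_1(w) \geqslant 5$, Raj plays one troop on $p$. The analysis in \Cref{lem:4edges10troops-win} already yields a strong win in the scary case even when $T_L = T_R = 10$, since Raj claims three territories from edges 2, 3, 4 against Lata's single territory on $w$. When $T_R \geqslant T_L + 1$, Raj's surplus can only grow: after responding triumphantly (with $f_1(v) + 1$ troops) on each edge Lata subsequently touches, Raj has leftover troops that he places one per vertex on any empty vertex of edges 2, 3, 4, strengthening (rather than weakening) the original argument.

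\textbf{Triumphant opening.} If $f_1(w) \leqslant 4$, Raj places $f_1(w) + 1$ troops on $p$, securing an attack on $w$ during the attack phase. In the subgame over edges 2, 3, 4, Raj has $T_R - f_1(w) - 1 \geqslant T_L - f_1(w)$ troops, i.e.\ at least as many as Lata, whereas in the corresponding case of \Cref{lem:4edges10troops-win} Raj had exactly one fewer troop than Lata on the subgame. The plan is to leverage this extra relative troop to yield at least one additional territory for Raj on the subgame; combined with the $w$-attack, this gives the desired 2-territory margin. Concretely, whenever the analogous subcase of \Cref{lem:4edges10troops-win} had Raj play $f_2(\cdot) = 0$ on some vertex, we can redirect Raj's extra troop there as a single safe placement (since the adjacent vertex is then either empty or contains at most one Lata troop and can be further overwhelmed); and whenever it had Raj play a matching $f_2(v) = f_1(v)$ on some edge, we can upgrade that to a triumphant $f_1(v) + 1$, creating an additional Raj attack without exposing any Raj vertex.

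\textbf{Main obstacle.} As in \Cref{lem:4edges10troops-win}, the triumphant case requires an enumeration over $f_1(w) \in \{1,2,3,4\}$ and over Lata's distributions on edges 2, 3, 4, with the strategy made adaptive by having Raj track Lata's remaining troop count (and hence her capacity to threaten each unplayed edge). The delicate subcases are those with small $f_1(w)$ where Lata can concentrate many troops on a single later edge: there Raj must carefully decide, on each of his turns, whether to match Lata's edge, to play triumphantly, or to pre-empt on a different edge, so as to avoid any of his vertices becoming vulnerable while still creating the one additional attack (or one additional safe territory) required. These subcases will be verified in the spirit of the exhaustive enumeration in the proof of \Cref{lem:4edges10troops-win}, modified as described above to spend Raj's spare troop optimally.
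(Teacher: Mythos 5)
Your overall strategy is the paper's: reuse the case analysis of \Cref{lem:4edges10troops-win}, and spend Raj's surplus to turn the subgame draws there into subgame wins, which combined with the triumphant attack on $w$ gives the two-territory margin. For a scary opening, and for a triumphant opening with $f_1(w) \in \{1,2,3\}$ (so that Lata's remaining budget is $8,7,6$, exactly the values enumerated in \Cref{lem:4edges10troops-win}, while Raj has one troop more than in those subcases), your plan of adding one troop wherever the enumerated response was a draw is precisely what the paper does, and is fine.

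The genuine gap is the triumphant opening with $f_1(w) = 4$ (and more generally whenever $T_L - f_1(w) \leqslant 5$). There your two supporting claims break down simultaneously: Lata then has only five troops left for the three remaining edges, a value for which \Cref{lem:4edges10troops-win} contains \emph{no} analogous subcase to ``modify,'' and your inequality $T_R - f_1(w) - 1 \geqslant T_L - f_1(w)$ is an equality when $T_R = 10$, $T_L = 9$, so there is no spare troop to redirect --- your recipe (add a troop where the old response placed $0$, or upgrade a matching placement to $f_1+1$) presupposes both an existing enumerated response and an absolute surplus, and neither is available. Asserting that equal subgame budgets let Raj net one extra territory on three edges is exactly the nontrivial claim that would need a fresh enumeration (and it is delicate: e.g.\ after $f_1(w)=4$, $f_2(p)=5$, if Lata plays $4$ on the next edge, Raj must respond with a sacrifice rather than a matching or triumphant placement to keep a two-territory margin), so deferring it ``in the spirit of'' the earlier proof leaves the key case unproved. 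The paper avoids this entirely by changing strategy at $f_1(w)=4$: Raj plays a single troop on $p$, conceding the first edge, and then answers with $f_1(x)+1$, $f_1(y)+1$, $f_1(z)+1$ on $q,r,s$; this needs at most $1 + 5 + 3 = 9 \leqslant T_R$ troops, Lata's only attack is on $p$, and Raj finishes with three territories to Lata's one. You should either adopt this switch for $f_1(w)=4$ or supply the missing equal-budget analysis explicitly.
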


\begin{proof} We adapt the argument we made in the proof of~\Cref{lem:4edges10troops-win} and assume that Lata has nine troops and Raj has ten: the scenarios when Lata has fewer troops and/or Raj has more can be easily argued in a similar fashion.

When Raj makes a scary first move he already has secured a strong win. When $f_1(w) = 1, 2$ or $3$, Lata now has eight, seven, and six troops remaining respectively, while Raj still has nine, eight, and seven troops. So we appeal to the cases in the proof of~\Cref{lem:4edges10troops-win} for when Lata has eight, seven, and six troops remaining, but modify Raj's response to use one extra troop whenever the response was originally a draw. This ensures a strong win across all cases. When $f_1(w) = 4$, Lata has five troops remaining and Raj has six. In this case it is easily verified that Raj can respond by playing $f_2(p) = 1$, giving up on the first edge, and then responding on $q$, $r$ and $s$ with $f_1(x)+1$, $f_1(y)+1$ and $f_1(z)+1$ troops: this is again a strong win.
\end{proof}

\begin{theorem}
    Let~$G$ be a disjoint union of~$N$ edges with~$N \geqslant 4$. Then:
    \begin{enumerate}
        \item[(a)] If both players have~$T$ troops each and~$T \geqslant N+6$, then Raj has a winning strategy.
        \item[(b)] If Raj has~$T_R \geqslant N+6$ troops and Lata has~$T_L$ troops with~$T_L < T_R$, then Raj has a strongly winning strategy.
    \end{enumerate}
\end{theorem}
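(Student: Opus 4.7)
The plan is to prove parts (a) and (b) simultaneously by induction on $N$. The base case $N = 4$ is furnished by Lemma~\ref{lem:4edges10troops-win} for part (a) and by Lemma~\ref{lem:4edges10troops-strongwin} for part (b); for the latter, a short supplementary argument handles the regime $T_L \geq 10$ by having Raj enact the strategy tailored to $T_L$ troops and spend the surplus $T_R - T_L$ troops claiming additional vertices of the graph that Lata does not reach.

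For the inductive step with $N \geq 5$, my plan is to have Raj respond to Lata's very first move---say $c_1$ troops on some edge $e_1 = (u_1, v_1)$---as follows. If $c_1 = 1$, Raj mirrors by placing one troop on $v_1$: no attack on $e_1$ is ever possible, both endpoints are held, and the game reduces cleanly to the matching game on $N - 1$ edges with each player's troop count reduced by one. Both inductive hypotheses are preserved ($T - 1 \geq (N-1) + 6$ in part (a), and the same plus $T_L - 1 < T_R - 1$ in part (b)), so the hypothesis on the residual game directly gives the required win, respectively strong win. If $c_1 \geq 2$, Raj instead plays a scary move of a single troop on $v_1$: Lata's $c_1 \geq 2$ troops eliminate Raj's token in the attack phase, contributing $-1$ to Raj's territorial margin; in return, the residual $(N-1)$-edge game finds Raj with $T_R - 1 \geq (N-1) + 6$ troops against Lata's $T_L - c_1$, a strict troop lead of $c_1 - 1 \geq 1$. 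Invoking the inductive hypothesis for part (b) on the residual game yields a strong win ($\geq +2$), which combined with the $-1$ on $e_1$ produces a net $+1$---this establishes part (a).

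The principal obstacle is to lift the argument above to a net $+2$ in the subcase $c_1 \geq 2$ of part (b), since a bare scary move followed by a residual strong win nets only $+1$. My plan to close this gap is to refine Raj's first move based on $T_R - T_L$ and $c_1$: when $T_R$ has enough slack above $N + 6$, Raj mirrors with $c_1$ troops and the reduced game still meets the hypothesis of part (b), delivering a full $+2$ with no scary cost; when the slack is tight, Raj triumphs by playing $c_1 + 1$ troops on $v_1$ (securing $+1$ on $e_1$) and leverages the residual troop gap to apply either part (a) or part (b) inductively to pick up a further $+1$ or $+2$. Carefully verifying that the budget $T_R \geq N + 6$ supports this case split for every combination of $c_1$, $T_R - T_L$, and $N$---and in particular handling the tight case $T_R = T_L + 1 = N + 6$ with $c_1 \in \{2, 3, 4\}$---is where the bulk of the bookkeeping will lie.
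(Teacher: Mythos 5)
Your argument for part (a) is essentially the paper's: mirror a harmless first placement, answer a heavy placement with a one-troop sacrifice on the partner endpoint, and invoke statement (b) on the residual matching to convert a strong win there into a net $+1$ overall. The paper packages this differently---it keeps the game \emph{balanced}, mirroring every ``normal'' move and switching to the one-troop response only at the first ``dangerous'' move, with the move leaving exactly three fresh edges (a ``cliffhanger'') handled by playing the explicit four-edge strategy of \Cref{lem:4edges10troops-win}---whereas you recurse edge by edge from Lata's very first move. Both reductions are fine for (a), modulo the attack-order bookkeeping that the paper also treats lightly.

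The genuine gap is part (b) for $N \geqslant 5$: you do not prove it, you only sketch a case split and concede the verification is open, and the case you yourself flag is exactly where the proposed moves fail. With $T_R = T_L + 1 = N+6$ and $c_1 \geqslant 2$: mirroring $c_1$ leaves Raj with $T_R - c_1 < (N-1)+6$; the triumphant reply $c_1+1$ leaves \emph{equal} residual budgets $T_L - c_1 = (T_R - 1) - c_1 < (N-1)+6$, so neither inductive hypothesis applies; and the scary reply yields at best $+2-1 = +1$, one short of a strong win. Closing this seems to need a new ingredient (for instance, a quantitatively stronger residual claim, e.g.\ that a larger troop gap buys a margin larger than two), not just bookkeeping. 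This is not a peripheral omission: your own proof of (a) at $N$ invokes (b) at $N-1$, so as written your induction delivers (a) only up to $N=5$ and never delivers (b). Moreover, even at the base level your (a) for $N=5$ can need (b) on four edges with $T_L \geqslant 10$ (e.g.\ $T=12$, $c_1=2$), which \Cref{lem:4edges10troops-strongwin} does not cover; your ``spend the surplus claiming additional vertices'' patch is not worked out, since on a fully contested matching there are no territories beyond the partner endpoints, and playing the strategy tailored to $T_L$ troops only guarantees a win, not a strong one. The paper avoids the asymmetric-budget difficulty in its inductive step by staying balanced until the first deviation (so only the induction hypothesis, never the inductive step, involves unequal budgets), which is precisely the part of the argument your edge-by-edge reduction forces you to confront and leaves unresolved.
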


\begin{proof}
We proceed by (strong) induction on the number of edges. The base case is when~$N = 4$, and the two claims follow directly from~\Cref{lem:4edges10troops-strongwin,lem:4edges10troops-win}, respectively. Now assume that the claims hold for all~$4 \leqslant M < N$ edges, and let the edges of the graph~$G$ be given by:

$$E(G) := \{e_1 = (u_1,v_1), \cdots, e_i = (u_i,v_i), \cdots, e_N = (u_N,v_N)\}.$$

The strategy employed by Raj will involve playing on the ``other endpoint'' of the edge chosen by Lata, so we will maintain the invariant that after~$r$ rounds, there are~$r$ edges both of whose endpoints are fully occupied, and~$(N-r)$ edges that have not been played on at all. Therefore, without loss of generality and by appropriate renaming, we assume that Lata places~$t_i$ troops on the vertex~$u_i$ on her~$i\textsuperscript{th}$ turn. 

We say that a game state is \emph{balanced} if both Raj and Lata have the same number of troops. Suppose we have a balanced game after~$i$ rounds, where both players have~$k$ troops. We say that a move by Lata that involves placing~$t_i$ troops on~$u_i$ is:

\begin{itemize}
    \item {\color{IndianRed}\textbf{dangerous}} if~$k - t_i < (N-i) + 6$,
    \item {\color{DodgerBlue}\textbf{a cliffhanger}} if~$i = N-3$, in other words, three edges remain after this move, and
    \item {\color{SeaGreen}\textbf{normal}} otherwise.
\end{itemize}

Note that the game is balanced to begin with. Now, the strategy employed by Raj is the following. If the~$i\textsuperscript{th}$ move made by Lata is normal, then Raj responds by placing~$t_i$ troops on~$v_i$. This response maintains the invariant that the game is balanced on all normal moves.

Let~$e_i$ be the first edge on which Lata makes a move that is not normal. This move is either dangerous or a cliffhanger. We argue these cases separately. 

Suppose the move is a cliffhanger. Since all the previous moves were normal, we know that the game is balanced and Raj has at least~$(N-(N-4)+6) = 10$ troops at this stage of the game. Raj then responds to Lata's current and future moves according to the strategy described in the proof of~\Cref{lem:4edges10troops-win}, and wins the game.

Suppose now that the move is dangerous, and WLOG not a cliffhanger. Then Raj responds by placing one troop on~$v_i$. Because this is the first move that is not normal, the game was balanced up to this point --- say with~$k$ troops --- and Raj had at least~$(N-i+1) + 6$ troops, therefore he has at least~$(N-i) + 6$ troops after this move. Also, Lata has~$k - t_i < (N-i) + 6$ troops, since the move was dangerous. Therefore, after this move, on the remaining game with~$(N-i)$ edges, Raj satisfies the induction hypothesis and can play according to a strongly winning strategy. Note that~$(N-i) \geqslant 4$, because the move is not a cliffhanger. Once the game is complete, Raj has at least two territories more than Lata on the game with~$(N-i)$ edges, and has one territory less than Lata on the game with the first~$i$ edges: recall that the first~$(i-1)$ edges were drawn and Lata has a valid attack on the edge~$e_i$.
\end{proof}

\subsection{Cycles}



We now show that both players have drawing strategies when the graph is a cycle of length at most five.

\begin{lemma}
    \label{lem:short-cycles-draw}
    Assume Lata and Raj have~$T$ troops each. If~$G$ is a cycle on at most five vertices, both players have a drawing strategy.
\end{lemma}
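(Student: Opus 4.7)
The plan is to exhibit explicit drawing strategies for both Lata and Raj in each of the three cases $C_3$, $C_4$, and $C_5$, using increasingly intricate strategies. A common tool throughout is the observation that a vertex holding all $T$ of a player's troops cannot be successfully attacked, because the opponent has at most $T$ troops in total and therefore at most $T$ troops across that vertex's two neighbors combined. Closely related is the fact that whichever player exhausts their troops first leads the attack phase, so the player who ``dumps'' all their troops early gets first attack against an opponent who splits.

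For $C_3$, both players' drawing strategy is simply to place all $T$ troops on a single vertex in their first move (Lata picks any vertex; Raj picks any vertex other than Lata's). If both players do this, the resulting configuration has no vulnerable vertex and the game ends $1$-$1$ with $T$-$T$ troops. If instead either player splits their troops, each of the splitter's territories is vulnerable to the opponent's concentrated position (since every pair of vertices in $C_3$ is adjacent), and as the non-splitter leads the attack phase, they sweep the splitter's territories. Hence each player's concentration strategy guarantees at least a draw. For $C_4$, Lata's drawing strategy is again dump-all-on-one-vertex, which keeps $v_1$ safe and limits Raj to at most one safe territory of his own (either the non-neighbor $v_3$, or a neighbor with $T$ troops). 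Raj's drawing strategy is the \emph{antipodal mirror}: whenever Lata places $t$ troops on $v_i$, Raj places $t$ troops on $v_{i+2}$. A direct check shows the resulting configuration is antipodally symmetric and any vulnerabilities appear in matched pairs that mutually cancel in the attack phase, preserving a tie in both territories and troops.

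For $C_5$, the strategies are more delicate because concentrating all $T$ troops on one vertex no longer draws for Lata: Raj would respond by splitting across the two non-neighbors of $v_1$, securing two safe territories to Lata's one. Lata's drawing strategy instead opens with $\lceil T/2 \rceil$ troops on $v_1$, planning to place her remaining $\lfloor T/2 \rfloor$ troops on whichever of $v_3$ or $v_4$ Raj has not occupied, thereby claiming two safe non-adjacent territories; if Raj invests heavily on a neighbor of $v_1$ so as to threaten it, Lata instead reinforces $v_1$ up to $T$ troops and the game reduces to a safe $1$-$1$ position. Raj's drawing strategy is symmetric in spirit: he responds to Lata's first move of $t_1$ troops on $v_i$ by placing $t_1$ troops on the distance-$2$ vertex $v_{i+2}$ (keeping both safe), then uses his remaining troops either to claim the second non-neighbor of $v_i$ or to fortify a vertex adjacent to $v_i$ as Lata's expansion dictates.

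The main obstacle is the $C_5$ analysis, where the absence of antipodal symmetry forces a case split on Raj's first move (and on his troop count) and careful bookkeeping of vulnerabilities as each subsequent placement lands. The structural fact that keeps the analysis tractable is that the maximum independent set in $C_5$ has size two; in particular, neither player can secure more than two safe, pairwise non-adjacent territories once the opponent occupies an independent pair with enough troops on each. This caps the best outcome under optimal play at a $2$-$2$ (or $1$-$1$) tie, and verifying that the adaptive strategies above indeed realize this tie against every response is where the bulk of the proof's effort sits.
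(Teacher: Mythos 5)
Your $C_3$ argument matches the paper's, and your $C_4$ strategies (Lata concentrating all $T$ troops on one vertex; Raj mirroring antipodally on $v_{i+2}$) are a sound alternative to the paper's approach, where both players instead secure two \emph{adjacent} vertices with roughly $T/2$ troops each. The real issue is $C_5$, which is where essentially all of the work in the paper's proof lies, and there your proposal has a genuine gap. First, the pivotal move in your Lata strategy is illegal: when Raj invests heavily on a neighbor of $v_1$, you have Lata ``reinforce $v_1$ up to $T$ troops.'' The placement rules only allow a move on an \emph{empty} territory, and troops never move once placed, so no vertex can ever be topped up after the first placement on it. The same worry applies to Raj ``fortifying'' a vertex adjacent to $v_i$ if that is meant as adding troops to an occupied vertex. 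Since the heavy-neighbor threat is exactly the hard case (with half her troops on $v_1$, Lata's opening vertex is genuinely attackable), the strategy as written has no legal answer to it; the paper handles this case by a careful choice of \emph{which new} vertex Lata occupies next, with sub-cases on whether Raj's stack is smaller than, equal to, or larger than $\lceil T/2 \rceil$, and in some branches Lata's $v_1$ is simply conceded in exchange for a compensating attack.

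Second, even granting legal moves, the $C_5$ strategies are only sketched, and the unspecified adaptive choices are exactly where a wrong branch loses. For example, if Lata opens with $1$ troop on $v_1$, your Raj mirrors with $1$ troop on $v_3$; if Lata then dumps $T-1$ on $v_4$, Raj must place his remaining $T-1$ on $v_2$ (which draws), whereas the equally natural choice $v_5$ loses outright: all four occupied vertices become vulnerable, Lata attacks first, eliminates $v_5$, and then clears $v_3$ while Raj has no counterattack, ending $2$--$0$. Your proposal does not determine this choice, and the structural cap you invoke (maximum independent set of size two, hence at worst a $2$--$2$ or $1$--$1$ tie) does not substitute for the case analysis: outcomes also hinge on who leads the attack phase, on chains of attacks that empty vertices and remove vulnerabilities, and on the troop-count tiebreaker, none of which are controlled by the independence number. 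The paper's proof is precisely this exhaustive bookkeeping (Lata opening with $\lfloor T/2 \rfloor$ on $v_1$, Raj answering with $\lfloor T/2 \rfloor$ on a \emph{neighbor} of Lata's first vertex, followed by a full case split on the opponent's second and third moves), and that content is missing from the proposal.
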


\begin{proof}
    We argue the cases for cycles of lengths three, four, and five separately.

\paragraph*{$G$ is a triangle.} It is easily checked that both players can draw the game by placing all their troops on their turn on any available vertex.
    
\paragraph*{$G$ is a $4$-cycle.} Lata can always ensure that they are able to place their troops on two adjacent vertices, no matter how Raj makes their first move. By an argument similar to the proof of~\Cref{lem:2edges-draw}, we see that Lata can draw the game by placing $\lceil \nicefrac{T}{2} \rceil$ troops on one of the vertices and $\lfloor \nicefrac{T}{2} \rfloor$ troops on an available neighbor.

We now argue a draw strategy for Raj. Let the vertices of the $4$-cycle be $\{v_1,v_2,v_3,v_4\}$ and WLOG suppose Lata positions $a$ troops on $v_1$ on their turn. If $a = T$, then Raj responds by placing $T$ troops on $v_3$ and the game ends immediately in a draw. If $a < T$, Raj places $\lceil \nicefrac{T}{2} \rceil$ troops on $v_3$ and $\lfloor \nicefrac{T}{2} \rfloor$ troops on the vertex that remains. Notice that matter where Lata plays on their second move, Raj has their troops on two adjacent vertices. The argument that the game is a draw is now similar to the argument we made for Lata.

\paragraph*{$G$ is a $5$-cycle.}

We consider the~$5$-cycle on the vertices~$\{v_1, v_2, v_3, v_4, v_5\}$.

\textbf{A Draw Strategy for Lata.} Lata begins by placing $\lfloor \nicefrac{T}{2} \rfloor$ troops on $v_1$. 

Suppose Raj responds by placing $a$ troops on $v_2$ (the case for $v_5$ is symmetric). We then have the following cases: 

\begin{itemize}
\item If $a = T$, Lata responds by placing one troop on $v_5$ and their remaining troops on $v_4$. Raj has one attack and Lata has none, but Lata ends with two territories and Raj has one.
\item If $a < T$, Lata places $\lceil \nicefrac{T}{2} \rceil$ troops on $v_3$. Now, suppose Raj places $(T-a)$ troops on $v_4$ for his second move. Lata has the first attack, which she uses to attack $v_2$. Note that this is always a feasible attack since $a < T$, and the game is draw even if Raj has a valid attack remaining. The reasoning for the case when Raj places $(T-a)$ troops on $v_5$ for his second move works out the same way.
\end{itemize}

Now suppose Raj responds by placing $a$ troops on $v_3$ (the case for $v_4$ is symmetric). We then have the following cases: 

We then have the following cases: 

\begin{itemize}
\item If $a = T$, Lata responds by placing her remaining troops on $v_5$ and wins the game: she owns two territories and Raj has one, and neither player has a valid attack.
\item If $a < \lceil \nicefrac{T}{2} \rceil$, Lata places $\lceil \nicefrac{T}{2} \rceil$ troops on $v_2$. Now, suppose Raj places $(T-a)$ troops on $v_4$ for his second move. Lata has the first attack, which she uses to attack $v_3$. Note that this is always a feasible attack since $a < \lceil \nicefrac{T}{2} \rceil$, and Raj has no attacks, so Lata wins. If Raj places $(T-a)$ troops on $v_5$ for his second move, then Lata attacks Raj on $v_3$ on her turn using her troops on $v_2$. Raj then attacks Lata's troops on $v_1$, and the game is drawn with both players owning one territory at the end of the game.
\item If $a = \lceil \nicefrac{T}{2} \rceil$, Lata places $\lceil \nicefrac{T}{2} \rceil$ troops on $v_2$. Now, suppose Raj places his remaining $\lfloor \nicefrac{T}{2} \rfloor$ troops on $v_4$ for his second move. The game then draws with no feasible attacks. On the other hand, if Raj places $\lfloor \nicefrac{T}{2} \rfloor$ troops on $v_5$ for his second move, then Lata attacks Raj on $v_5$ on her turn using her troops on $v_1$, and Raj has no attacks, so Lata wins.
\item If $a > \lceil \nicefrac{T}{2} \rceil$, Lata places $\lceil \nicefrac{T}{2} \rceil$ troops on $v_5$. Now, suppose Raj places his remaining $(T-a)$ troops on either $v_4$ or $v_2$ for his second move. Lata has the first attack, which she uses to attack the troops Raj placed on his second move using her troops on $v_5$ or $v_1$ respectively. Note that this is always a feasible attack since:

$$a > \lceil \nicefrac{T}{2} \rceil \implies T-a < \lfloor \nicefrac{T}{2} \rfloor.$$

After this, it is easy to check that Raj has no attacks, so Lata wins.
\end{itemize}

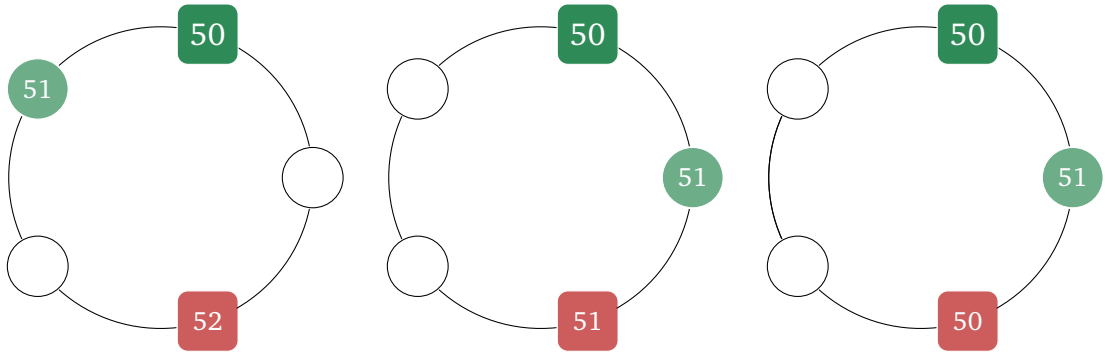
\begin{figure}[t]
    \begin{tikzpicture}[scale=0.5]
    
        \def \n {5}
        \def \radius {4cm}
        \def \margin {12} 
        
        \node[draw, black, circle, fill=white, minimum size=0.8cm] at ({360/5 * (1 - 1)}:\radius) {};
        \draw[-, >=latex] ({360/5 * (1 - 1)+\margin}:\radius) 
            arc ({360/5 * (1 - 1)+\margin}:{360/5 * (1)-\margin}:\radius);
        
        \node[draw, white, rounded corners, rectangle, fill=SeaGreen, minimum size=0.8cm] at ({360/5 * (2 - 1)}:\radius) {\Large{50}};
        \draw[-, >=latex] ({360/5 * (2 - 1)+\margin}:\radius) 
            arc ({360/5 * (2 - 1)+\margin}:{360/5 * (2)-\margin}:\radius);
        
        \node[draw, white, circle, fill=SeaGreen!70, minimum size=0.8cm] at ({360/5 * (3 - 1)}:\radius) {51};
        \draw[-, >=latex] ({360/5 * (3 - 1)+\margin}:\radius) 
            arc ({360/5 * (3 - 1)+\margin}:{360/5 * (3)-\margin}:\radius);
        
        \node[draw, black, circle, fill=white, minimum size=0.8cm] at ({360/5 * (4 - 1)}:\radius) {};
        \draw[-, >=latex] ({360/5 * (4 - 1)+\margin}:\radius) 
            arc ({360/5 * (4 - 1)+\margin}:{360/5 * (4)-\margin}:\radius);
        
        \node[draw, white, rounded corners, rectangle, fill=IndianRed, minimum size=0.8cm] at ({360/5 * (5 - 1)}:\radius) {52};
        \draw[-, >=latex] ({360/5 * (5 - 1)+\margin}:\radius) 
            arc ({360/5 * (5 - 1)+\margin}:{360/5 * (5)-\margin}:\radius);

        \begin{scope}[shift={(10cm, 0cm)}]
    
            \def \n {5}
        \def \radius {4cm}
        \def \margin {12} 
        
        \node[draw, white, circle, fill=SeaGreen!70, minimum size=0.8cm] at ({360/5 * (1 - 1)}:\radius) {51};
        \draw[-, >=latex] ({360/5 * (1 - 1)+\margin}:\radius) 
            arc ({360/5 * (1 - 1)+\margin}:{360/5 * (1)-\margin}:\radius);
        
        \node[draw, white, rounded corners, rectangle, fill=SeaGreen, minimum size=0.8cm] at ({360/5 * (2 - 1)}:\radius) {\Large{50}};
        \draw[-, >=latex] ({360/5 * (2 - 1)+\margin}:\radius) 
            arc ({360/5 * (2 - 1)+\margin}:{360/5 * (2)-\margin}:\radius);
        
        \node[draw, black, circle, fill=white, minimum size=0.8cm] at ({360/5 * (3 - 1)}:\radius) {};
        \draw[-, >=latex] ({360/5 * (3 - 1)+\margin}:\radius) 
            arc ({360/5 * (3 - 1)+\margin}:{360/5 * (3)-\margin}:\radius);
        
        \node[draw, black, circle, fill=white, minimum size=0.8cm] at ({360/5 * (4 - 1)}:\radius) {};
        \draw[-, >=latex] ({360/5 * (4 - 1)+\margin}:\radius) 
            arc ({360/5 * (4 - 1)+\margin}:{360/5 * (4)-\margin}:\radius);
        
        \node[draw, white, rounded corners, rectangle, fill=IndianRed, minimum size=0.8cm] at ({360/5 * (5 - 1)}:\radius) {51};
        \draw[-, >=latex] ({360/5 * (5 - 1)+\margin}:\radius) 
            arc ({360/5 * (5 - 1)+\margin}:{360/5 * (5)-\margin}:\radius);
            
        \end{scope}

        \begin{scope}[shift={(20cm, 0cm)}]
    
            \def \n {5}
        \def \radius {4cm}
        \def \margin {12} 

        \node[draw, black, circle, fill=white, minimum size=0.8cm] at ({360/5 * (3 - 1)}:\radius) {};
        \draw[-, >=latex] ({360/5 * (3 - 1)+\margin}:\radius) 
            arc ({360/5 * (3 - 1)+\margin}:{360/5 * (3)-\margin}:\radius);
            
        \draw[-, >=latex] ({360/5 * (1 - 1)+\margin}:\radius) 
            arc ({360/5 * (1 - 1)+\margin}:{360/5 * (1)-\margin}:\radius);
        
        \node[draw, white, rounded corners, rectangle, fill=SeaGreen, minimum size=0.8cm] at ({360/5 * (2 - 1)}:\radius) {\Large{50}};
        \draw[-, >=latex] ({360/5 * (2 - 1)+\margin}:\radius) 
            arc ({360/5 * (2 - 1)+\margin}:{360/5 * (2)-\margin}:\radius);
        
        \node[draw, white, circle, fill=SeaGreen!70, minimum size=0.8cm] at ({360/5 * (1 - 1)}:\radius) {51};
        \draw[-, >=latex] ({360/5 * (3 - 1)+\margin}:\radius) 
            arc ({360/5 * (3 - 1)+\margin}:{360/5 * (3)-\margin}:\radius);
        
        \node[draw, black, circle, fill=white, minimum size=0.8cm] at ({360/5 * (4 - 1)}:\radius) {};
        \draw[-, >=latex] ({360/5 * (4 - 1)+\margin}:\radius) 
            arc ({360/5 * (4 - 1)+\margin}:{360/5 * (4)-\margin}:\radius);
        
        \node[draw, white, rounded corners, rectangle, fill=IndianRed, minimum size=0.8cm] at ({360/5 * (5 - 1)}:\radius) {50};
        \draw[-, >=latex] ({360/5 * (5 - 1)+\margin}:\radius) 
            arc ({360/5 * (5 - 1)+\margin}:{360/5 * (5)-\margin}:\radius);

        \end{scope}
        
    \end{tikzpicture}    
\caption{Some examples illustrating Lata's drawing strategy described above with $101$ troops, where Raj responds to Lata's first move (denoted by a green square) by playing on a non-adjacent location on his first move. \textbf{(Left.)} If Raj places more than $51$ troops (i.e, the number of troops that Lata has after making her first move), Lata responds by keeping distance from Raj's ``heavyweight'' vertex and ensures a win no matter how Raj plays his remaining $49$ troops. \textbf{(Middle.)} If Raj uses $51$ troops instead, Lata responds by matching next to Raj, and now no matter how Raj places his remaining $50$ troops, the game draws. \textbf{(Right.)} If Raj uses $50$ troops or less, Lata responds by matching next to Raj, and depending on how Raj places his remaining $51$ troops, the game either draws or Lata wins.}
\end{figure}

\textbf{A Draw Strategy for Raj.} Suppose Lata begins by placing $x$ troops on $v_1$. If $x = T$, then Raj can draw by placing $T$ troops on, say $v_3$, or win by placing one troop on $v_3$ and $(T-1)$ troops on $v_4$. So we assume that Lata places $x < T$ troops on her first turn.

Raj responds to this move by placing $\lfloor \nicefrac{T}{2} \rfloor$ troops on $v_5$. Lata has $(T-x)$ troops left, which she can either choose to play fully on her second move (and take the first attack), or she can choose to distribute $(T-x) = a + b$ across two moves. We provide responses for both cases separately. First, suppose Lata places $(T-x)$ troops on her turn. 

\begin{itemize}
\item Suppose Lata plays on $v_2$. Then Raj places $\lceil \nicefrac{T}{2} \rceil$ troops on $v_3$.

\item Suppose Lata plays on $v_3$. Then Raj places $\lceil \nicefrac{T}{2} \rceil$ troops on $v_4$.

\item Suppose Lata plays on $v_4$. Then Raj places $\lceil \nicefrac{T}{2} \rceil$ troops on $v_2$.
\end{itemize}

In all cases, Raj draws by an argument similar to the one we made in the proof of~\Cref{lem:2edges-draw}. Now, suppose Lata places $\lceil \nicefrac{T}{2} \rceil \leqslant y < (T-x)$ troops on her second move. We then have the following cases.

\begin{itemize}
\item Suppose Lata plays on $v_2$. Then Raj places $\lceil \nicefrac{T}{2} \rceil$ troops on $v_4$. It is easy to see that Raj draws in this situation.

\item Suppose Lata plays on $v_3$. Then Raj places $\lceil \nicefrac{T}{2} \rceil$ troops on $v_2$. This forces Lata to play her remaining troops on $v_4$. Note that she plays fewer than $\lfloor \nicefrac{T}{2} \rfloor$ troops on $v_4$, since she has already used up at least $\lceil \nicefrac{T}{2} \rceil$ troops on $v_4$, and at least one on $x$. Raj attacks $v_1$ on his first turn in the attack phase, and $v_4$ on his second turn, and thus the game ends in a draw.

\item Suppose Lata plays on $v_4$. Then Raj places $\lceil \nicefrac{T}{2} \rceil$ troops on $v_2$. The argument for why this is a draw is similar to the previous case.
\end{itemize}

If Lata places $y < \lceil \nicefrac{T}{2} \rceil$ troops on her second turn, we have the following cases.

\begin{itemize}
\item Suppose Lata plays on $v_2$. Then Raj places $\lceil \nicefrac{T}{2} \rceil$ troops on $v_3$, and Lata places her remaining troops on $v_4$. Notice that both $v_4$ and $v_2$ are vulnerable in this situation, and Raj can pull off two valid attacks while Lata has at most one, and the game is drawn.

\item Suppose Lata plays on $v_3$. Then Raj places $\lceil \nicefrac{T}{2} \rceil$ troops on $v_2$, and Lata places her remaining troops on $v_4$. Notice that both $v_1$ and $v_3$ are vulnerable in this situation, and Raj can pull off two valid attacks while Lata has at most one, and the game is drawn.

\item Suppose Lata plays on $v_4$. Then Raj places $\lceil \nicefrac{T}{2} \rceil$ troops on $v_2$, and Lata places her remaining troops on $v_3$. Now, on his first move in the attack phase, Raj attacks $v_1$, and after this, the argument for drawing the remaining game is the same as the argument used in~\Cref{lem:2edges-draw}.
\end{itemize}
This concludes the description of the draw strategy for Raj.
\end{proof}

\section{Concluding Remarks}
\label{sec:conclusions}

We initiated an analysis for the game of~\textsc{Aggression}, and established that the problem of devising a winning response to a given troop placement \emph{and} attack plan is \textsf{NP}-complete. We also showed that both players can draw on matchings if there are at least as many edges as troops or if there are at most two edges to begin with. We also showed that the second player can win on matchings that have at least three edges if the number of troops is at least six more than the number of edges. Finally, we also showed that both players can draw the game if the graph is a cycle of length at most five. There are several considerations for further work, and we suggest some open problems below.

\begin{itemize}
    \item What is the complexity of~\textsc{Optimal Response} when the graph is planar or acyclic?
    \item What is the complexity of~\textsc{Optimal Response} for the single-troop variant of the game?
    \item What is the complexity of the problem of finding which player wins a game if a graph and a $(T_L, T_R)$-troop placement is given?
    \item Which player has a winning strategy on a matching with $N$ edges and $T$ troops where $T \in \{N+1, N+2, N+3, N+4, N+5\}$?
    \item Which player wins the game on paths and cycles of length $k$?
\end{itemize}

We also propose that the following variant of the game is interesting: when both players can place at most one troop on their turn. This version, which we call \textsc{Micro Aggression}, appears to be non-trivial even on paths and cycles. We can establish, using standard mirroring arguments, that the second player can draw on odd cycles of length at least five and that the first player can draw on a path of any length, as shown below:

\begin{lemma} The first player can draw \textsc{Micro Aggression} on any path.
\end{lemma}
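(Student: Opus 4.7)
The plan is to describe a mirroring strategy for Lata and argue that it guarantees a draw on any path $P_n = v_1 v_2 \cdots v_n$, handling odd and even $n$ separately.

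For odd $n$, the plan is to use the classical reflection strategy: Lata plays the central vertex $v_c$ (where $c = (n+1)/2$) on her first move, and on each subsequent turn, if Raj just played $v_j$, Lata mirrors by playing $v_{n+1-j}$. A straightforward induction on rounds shows that the configuration after each of Lata's moves is invariant under $\sigma(v_j) = v_{n+1-j}$, so her mirror response is always valid (the target is empty by the inductive symmetry and distinct from $v_c$). After the placement phase, both players have placed the same number of troops. In the attack phase (which Lata leads, since she exhausts her troops one turn before Raj), each attack by Lata on a vulnerable Raj vertex $v_i$ can be met by Raj attacking the mirror vulnerable Lata vertex $v_{n+1-i}$; since $n+1-i = i \pm 1$ has no integer solution when $n$ is odd, the two vertices are non-adjacent on the path, so Raj's mirror attack is unaffected by Lata's prior attack. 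Symmetry is preserved throughout, yielding equal territories and troops -- a draw.

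For even $n$, there is no single fixed point under the reflection $v_j \mapsto v_{n+1-j}$, so a modification is required. The plan is to have Lata play $v_{n/2}$ first and then mirror Raj's moves using $\tau(v_j) = v_{n-j}$. This corresponds to reflecting about $v_{n/2}$ on the virtual odd path obtained by prepending a phantom vertex $v_0$ to $v_1$. Whenever Raj plays $v_j$ with $j \in \{1, \ldots, n-1\}$, Lata mirrors to $\tau(v_j) \in \{1, \ldots, n-1\}$, which is valid by an inductive symmetry argument analogous to the odd case. The only exceptional situation is when Raj plays $v_n$, for which $\tau(v_n) = v_0$ is not a real vertex; in that case, Lata plays the leftmost empty real vertex as her fallback, effectively treating $v_0$ as a permanently Lata-owned virtual territory.

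The main obstacle is verifying that the even-$n$ strategy still yields at least a draw in the presence of the exceptional case. The plan is to show by induction that the configuration on the extended path $\{v_0, v_1, \ldots, v_n\}$, with $v_0$ treated as Lata's, remains $\tau$-symmetric after each of Lata's moves. Consequently, at the end of the placement phase, Lata and Raj have placed the same number of real troops, and any vulnerability on $P_n$ comes paired under $\tau$ with a vulnerability of the opposite color (with at most a minor asymmetry near $v_0$ that only favors Lata). In the attack phase, Lata's first-attacker advantage, combined with the non-adjacency of $\tau$-mirror pairs on $P_n$ (which holds because $|j - (n-j)| = 1$ forces $2j = n \pm 1$, impossible for integer $j$ and even $n$), lets the odd-case mirror argument go through: every Lata attack on a vulnerable Raj vertex is answered by Raj's mirror attack on a vulnerable Lata vertex, preserving $\tau$-symmetry and keeping territory and troop counts balanced. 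The technical heart of the argument is a case analysis showing that the fallback response to Raj playing $v_n$ genuinely simulates a ``mirror move at $v_0$'' for the purposes of preserving the invariant, which is the step we expect to require the most care.
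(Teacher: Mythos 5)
Your odd-case placement strategy (centre vertex plus the point reflection $v_j \mapsto v_{n+1-j}$) is the same as the paper's, but both of your attack-phase arguments and your even-case construction have genuine gaps. First, the attack-phase reasoning is directed the wrong way: you argue that each attack by Lata ``can be met'' by Raj attacking the mirror vertex, i.e.\ you analyse a line of play in which Raj cooperates by mirroring. To prove that the \emph{first} player can force a draw you must fix Lata's attack-phase strategy and show it guarantees at least a draw against arbitrary attacks by Raj; the paper does this by having Lata mirror Raj's attacks (with a rule for the move where she has no attack to mirror). Relatedly, your claim that Lata always leads the attack phase is false: on an odd path with $n < 2T$ the board fills on Lata's placement move, Raj passes first and therefore attacks first, so the strategy must also handle Raj leading (as the paper notes for odd paths), and conversely when troops run out first Lata leads; neither your odd- nor even-case sketch covers both situations.

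Second, the even-case invariant you propose cannot be maintained as stated. After Raj plays $v_n$ and Lata answers with the leftmost empty real vertex $v_k$, the extended configuration is \emph{not} $\tau$-symmetric even with $v_0$ declared Lata's: Lata now owns $v_k$ while Raj need not own $\tau(v_k)=v_{n-k}$, and your strategy is silent about what Lata does when Raj later plays $v_{n-k}$ (her mirror square is occupied by her own troop), so the induction you plan to run breaks at its first application and the subsequent ``paired vulnerabilities'' argument has no foundation. The paper sidesteps the virtual vertex entirely: for even paths it uses the fixed-point-free reflection $v_j \leftrightarrow v_{n+1-j}$, lets Lata's first move be arbitrary, and prescribes ``play an arbitrary vertex'' only when Raj plays the reflection of Lata's currently unpaired vertex; this keeps an exact pairing in which every unmatched vertex belongs to Lata, so the asymmetry provably only helps her. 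Your vertex-centred reflection about $v_{n/2}$ might be repairable with a different fallback, but as written the key step you yourself flag as the ``technical heart'' is not merely delicate --- the invariant it is supposed to establish is false.
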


\begin{proof}
   The first player begins his turn by playing at the centre in the case of odd paths and an arbitrary vertex in the case of even paths. Imagine a mirror placed in the center of the path (through the center vertex in case of odd paths and after the $\frac{n}{2}^{th}$ vertex in the case of even paths). Now whenever the second player makes a move, first player places his troop on the reflection of the vertex where the second player just plays his troops unless the second player has played on the reflection of the first player's last move. The first player places his troop on an arbitrary vertex in this case.
Since it can be seen that first player never plays on the reflection of one of his previously played vertex, the configuration looks such that the center vertex has a first player's troop and any other vertex has a first player's troop if and only if its reflection has a second player's troop. Now in case of even paths, first player starts with an attack across the center mirror if such an attack exists and an arbitrary attack otherwise. He mimics the second player's attack on the opposite side in each subsequent turn. If the second player mirrors the first player's attack, first player can attack an arbitrary vertex. In the case of odd paths, the second player attacks first and the first player can always mirror him from the opposite side. Note that the center vertex belongs to the first player and each one of its neighbours belong to the first player and the second player. So any attack must happen completely on one side of the mirror. In both the places, the first player never runs out of attacks before the first player so at the end, the second player cannot have more territories than the first player.
\end{proof}

\begin{lemma} The second player can draw \textsc{Micro Aggression} on any odd-length cycle with at least five vertices.
\end{lemma}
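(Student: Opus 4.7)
The plan is to equip Raj with a reflection-based mirroring strategy, analogous in spirit to the one the first player uses on paths. Label the vertices of the cycle $C_{2k+1}$ as $v_0, v_1, \ldots, v_{2k}$ with indices modulo $2k+1$, and by cycle symmetry assume Lata's opening placement is on $v_0$. Because $n = 2k+1$ is odd, every reflection of $C_n$ has a unique fixed vertex and pairs the remaining $2k$ vertices. In response to Lata's opening move, Raj commits to the reflection $\sigma$ whose axis passes through a carefully chosen vertex $v_m$ with $m \neq 0$ (the choice of $m$ will depend on $v_0$), and plays on $\sigma(v_0)$ as his first move. Thereafter, whenever Lata places her troop on a vertex $u$, Raj replies by placing on $\sigma(u)$ whenever that vertex is still empty; when $\sigma(u)$ is unavailable, Raj falls back to a backup rule described below.

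The first step of the argument is to verify the symmetry invariant: after every one of Raj's turns, the configuration is $\sigma$-antisymmetric, meaning that $\sigma$ maps the set of Lata-held vertices onto the set of Raj-held vertices, except for a bounded-size set of ``exceptional'' vertices generated by past backup moves. The only way an exception arises is when Lata plays on the fixed vertex $v_m$ (so $\sigma(v_m) = v_m$ is now Lata's) or, inductively, on a vertex whose $\sigma$-partner has already been used as a previous backup target. The backup rule I would specify - playing, roughly, on the still-empty $\sigma$-partner of some previously exceptional vertex - keeps the number of unmatched positions bounded and maintains a weaker pairing invariant.

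Once placement concludes, the attack phase is handled using this near-symmetry. In \textsc{Micro Aggression} each vertex holds at most one troop, so on a cycle a vertex is vulnerable precisely when both its neighbours belong to the opposing player. Under $\sigma$-antisymmetry, vulnerable Lata vertices and vulnerable Raj vertices are in bijection (up to a controlled correction from the exceptional positions), and Raj can answer each of Lata's attacks with its $\sigma$-mirror. Since both players enter the attack phase with the same number of territories modulo this controlled correction, and the backup rule is chosen so that the correction cancels, the game ends with matching territory counts; the troop-count tiebreaker then agrees automatically, since in \textsc{Micro Aggression} each territory holds exactly one troop.

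The main obstacle I anticipate is the backup rule itself. Handling Lata's plays on the fixed vertex and on already-half-consumed pairs is what distinguishes the odd-cycle argument from the simpler even-cycle and path variants, since in the odd case every reflection has a fixed vertex and this fixed vertex cannot be absorbed by Raj's opening move without breaking the clean mirror. The delicate point will be to choose the backup placements (and in fact to choose $v_m$ as a function of $v_0$) so as to (i) preserve the pairing invariant, and (ii) guarantee in the attack phase that no Raj territory is left unmatched and vulnerable. I expect this to require a careful case analysis on the relative positions of Lata's exceptional moves around the axis through $v_m$, and this is where I anticipate the bulk of the technical work.
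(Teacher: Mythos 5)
Your plan follows the same basic route as the paper (a vertex-reflection mirror in the placement phase, with fallback moves when the mirror image is unavailable, followed by mirrored attacks), but as written it has two genuine gaps, one of which you flag yourself. First, the axis choice and the backup rule are not details to be postponed: they carry essentially the whole argument. For example, on $C_5$ with Lata opening at $v_0$ and the axis placed at a vertex at distance two, perfect mirroring is available throughout, yet it ends with Raj's two vertices adjacent, Lata holding the axis vertex, no vertex on the cycle vulnerable (on a cycle a single troop is vulnerable only if \emph{both} neighbours are enemy), and a $3$--$2$ loss for Raj; one can check that every fixed choice of axis admits a Lata line in which faithful mirroring loses, so Raj must sometimes deviate from the mirror even when the mirrored vertex is free (on $C_5$ the drawing strategy is to answer adjacent to Lata's opening and then choose his second vertex non-adjacent to his first, a choice that depends on Lata's play rather than on any fixed reflection). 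So "a carefully chosen $v_m$ plus a backup rule to be determined" is not yet a strategy, and the invariant you intend to maintain cannot be verified without it.

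Second, your attack-phase accounting does not close. On an odd cycle whose board fills, Lata places $(n+1)/2$ troops and Raj $(n-1)/2$, so the players do \emph{not} enter the attack phase with equal territories: Raj is one behind, and answering each of Lata's attacks with its $\sigma$-mirror can at best equalize the number of successful attacks, which still loses by one territory. The missing ingredient (which the paper's argument leans on) is that Raj passes first and therefore \emph{leads} the attack phase, and he must spend that extra first move on an unmirrored attack near the broken-symmetry region (the "attack across the centre") before settling into mirroring; only then does he net the one additional destroyed territory he needs. Relatedly, attack-mirroring itself needs a fallback: if Raj attacks a Lata vertex $x$ and Lata replies by attacking $\sigma(x)$, the mirror of her attack is already gone. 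Until you fix the axis/backup rule concretely, prove the placement invariant, and redo the endgame count with the first-attack tempo included, the argument is a plausible plan rather than a proof.
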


\begin{proof}
    The second player fixes a vertex as the centre of the cycle and imagines a mirror across it. He mirrors first player's every move unless the first player playes at the centre or at a reflection of one of his previously played vertices and he plays at an arbitrary vertex in this case. After all the troops are placed, the second player attacks first so if there is a valid attack across the center the second player plays that attack and starts with an arbitrary attack otherwise. He mimics the first player's attack on the opposite side in each subsequent turn. If the first player mirrors the second player's attack, second player can attack an arbitrary vertex. The second player never runs out of attacks before the first player and thus he cannot lose.
\end{proof}

We can also show, by an explicit strategy, that the first player wins on a path of length five. We leave all other cases open.





\end{document}